\theoremstyle{plain}
\newtheorem{thm}{Theorem}[section]
\newtheorem{lemma}[thm]{Lemma}
\newtheorem{prop}[thm]{Proposition}
\newtheorem{cor}[thm]{Corollary}
\newtheorem{fact}[thm]{Fact}
\theoremstyle{definition}
\newtheorem{example}[thm]{Example}
\newtheorem{prob}[thm]{Open Problem}
\newtheorem{counterexample}[thm]{Counter-Example}
\theoremstyle{remark}
\newtheorem{remark}[thm]{Remark}
\newcommand*{\SAut}{$\mathscr{S}$\kern-0.4ex-au\-to\-ma\-ton\xspace}
\newcommand*{\SAuta}{$\mathscr{S}$\kern-0.4ex-au\-to\-ma\-ta\xspace}
\newcommand*{\GAut}{$\mathscr{G}$\kern-0.2ex-au\-to\-ma\-ton\xspace}
\newcommand*{\GAuta}{$\mathscr{G}$\kern-0.2ex-au\-to\-ma\-ta\xspace}
\DeclareFontFamily{U}{mathb}{\hyphenchar\font45}
\DeclareFontShape{U}{mathb}{m}{n}{
  <5> <6> <7> <8> <9> <10> gen * mathb
  <10.95> mathb10 <12> <14.4> <17.28> <20.74> <24.88> mathb12
}{}
\DeclareSymbolFont{mathb}{U}{mathb}{m}{n}
\DeclareMathSymbol{\drsh}{3}{mathb}{"EB}
\newlength{\edgelength}
\newcommand{\trans}[4]{%
  \begin{tikzpicture}[auto, shorten >=1pt, >=latex, baseline=(l.base), inner sep=0pt, outer xsep=0.3333em]
    \node (l) {\ensuremath{#1}};%
    \setlength{\edgelength}{\widthof{\scriptsize\ensuremath{#2/#3}}+0.5cm}%
    \node[base right=\edgelength of l] (r) {\ensuremath{#4}};%
    \path[->] (l.mid east) edge node[inner sep=0pt] {\scriptsize\ensuremath{#2/#3}} (r.mid west);%
  \end{tikzpicture}%
}
\DeclareMathOperator{\id}{id}
\DeclareMathOperator{\Pre}{Pre}
\DeclareMathOperator{\Suf}{Suf}
\newcommand*{\rev}[1]{\partial #1}
\newcommand{\problem}[3][]{%
  \par\vspace{0.125cm plus 0.05cm minus 0.05cm}\begin{tabularx}{\textwidth-2\parindent}{lX}%
    \if\relax\detokenize{#1}\relax%
    \else%
      \textnormal{\textbf{Constant:}}&#1\\%
    \fi%
    \textnormal{\textbf{Input:}}&#2\\%
    \textnormal{\textbf{Question:}}&#3\\%
  \end{tabularx}\vspace{0.125cm plus 0.05cm minus 0.05cm}\par%
  }
\author{Daniele~D'Angeli\thanks{The first author was supported by the Austrian Science Fund project FWF P29355-N35.}}
\affil{Università degli Studi Niccolò Cusano\\
  Via Don Carlo Gnocchi, 3\\
  00166 Roma, Italy}
\author{Dominik~Francoeur\thanks{The second author was supported by a Doc.Mobility grant from the Swiss National Science Foundation as well as the "@raction" grant ANR-14-ACHN-0018-01 during a visit at the École Normale Supérieure in Paris.}}
\affil{Unité de mathématiques pures et appliquées\\
  ENS de Lyon\\
  46, allée d’Italie, 69364 Lyon Cedex 07, France}
\author{Emanuele~Rodaro}
\affil{Department of Mathematics\\
  Politecnico di Milano\\
  Piazza Leonardo da Vinci, 32\\
  20133 Milano, Italy}
\author{Jan~Philipp~Wächter}
\affil{Institut für Formale Methoden der Informatik (FMI)\\
  Universität Stuttgart\\
  Universitätsstraße 38\\
  70569 Stuttgart, Germany}
\title{On the Orbits of Automaton Semigroups\\and Groups}
\begin{document}
  \maketitle

  \begin{abstract}
    We investigate the orbits of automaton semigroups and groups to obtain algorithmic and structural results, both for general automata but also for some special subclasses.
    
    First, we show that a more general version of the finiteness problem for automaton groups is undecidable. This problem is equivalent to the finiteness problem for left principal ideals in automaton semigroups generated by complete and reversible automata.
    
    Then, we look at $\omega$-word (i.\,e.\ right infinite words) with a finite orbit. We show that every automaton yielding an $\omega$-word with a finite orbit already yields an ultimately periodic one, which is not periodic in general, however. On the algorithmic side, we observe that it is not possible to decide whether a given periodic $\omega$-word has an infinite orbit and that we cannot check whether a given reversible and complete automaton admits an $\omega$-word with a finite orbit, a reciprocal problem to the finiteness problem for automaton semigroups in the reversible case.
    
    Finally, we look at automaton groups generated by reversible but not bi-reversible automata and show that many words have infinite orbits under the action of such automata.\\
    \textbf{Keywords.} Automaton Groups, Automaton Semigroups, Orbits, Schreier Graphs, Orbital Graphs, Reversible
  \end{abstract}

  \begin{section}{Introduction}\enlargethispage{\baselineskip}
    Starting with the realization that the famous Grigorchuk group\footnote{See \cite{grigorchuk2008groups} for an accessible introduction to Grigorchuk's group.} (the first example of a group with subexponential but superpolynomial growth) and many other groups with interesting and peculiar properties can be generated by finite automata, the class of so-called automaton groups grew into a widely and intensively studied object. The finite automaton here is a finite-state, letter-to-letter transducer. It induces an action of the finite words over its state set on the words over its alphabet and this action is used to define the generated group.
    
    Roughly speaking, the research in this area is divided into three branches: the study of individual, special automaton groups (such as the mentioned Grigorchuk group), the study of structural properties of automaton groups and the study of algorithmic problems over automaton groups. The aim of this work is to contribute to the latter two of these branches. In fact, we will not only consider automaton groups but also their natural generalization to automaton semigroups, in which the interest seems to have risen lately, both for structural results (see, for example, \cite{cain2009automaton, brough2015automaton, brough2017automatonTCS, structurePart}) but also for algorithmic problems. They are particularly interesting for the latter point because many important and classical algorithmic problem in group theory are proven\footnote{Two of Dehn's fundamental problems in algorithmic group theory, the conjugacy problem and the isomorphism problems are among the problems which have been proven to be undecidable for automaton groups \cite{Su-Ve09}.} or suspected to be undecidable for automaton groups and it is usually easier to encode computations in semigroups than in groups. Sometimes algorithmic results for automaton semigroups could later be lifted to groups. An example for this is the order problem: first, it could be shown to be undecidable for automaton semigroups \cite{Gilbert13} and later this result could be extended to automaton groups \cite{bartholdi2017wordAndOrderProblems, gillibert2018automaton}. Similarly, a result on the complexity of the word problem could be lifted from the (inverse) semigroup case \cite{DAngeli2017} to the groups case \cite{pspacePart}. For another important problem, the finiteness problem, the current state is that it has been proven to be undecidable in the semigroup case \cite{Gilbert13} (and also in a more restrictive setting \cite{decidabilityPart}) but the decidability of the problem in the group case remains unknown.
    
    In this paper, we give a partial solution to this problem. The classical question of the finiteness problem is whether a given invertible automaton generates a finite or an infinite group. This is equivalent to the question whether there are infinitely many state sequences whose actions on the words over the alphabet are pairwise distinct. We show that the problem is undecidable if we instead ask whether there are infinitely many state sequences whose actions pairwisely differ on all words with a given prefix. If we pass to the dual automaton (i.\,e.\ if we change the roles of states and letters), this problem is the same as asking whether a given element $s$ of a semigroup $S$ generated by a complete and reversible (i.\,e.\ co-deterministic with respect to the input) automaton has an infinite left principal ideal $Ss \cup \{ s \}$.
    
    Quite recently, the current authors could show that the algebraic property of an automaton semigroup (and, thus, of an automaton group) to be infinite is equivalent to the fact that the action given by the generating automaton yields an $\omega$-word with an infinite orbit \cite{orbitsPart}. Thus, the finiteness problem is equivalent to asking whether such an infinite orbit exists. Reciprocally, we can also ask whether there is an (infinite) $\omega$-word with a finite orbit and we show that the corresponding decision problem is undecidable even for complete and reversible automata. Furthermore, we show that it is algorithmically impossible to test whether a given (periodic) word has a finite or infinite orbit.
    
    Structurally, we explore further consequences of the mentioned connection between the semigroup being infinite and the existence of an infinite orbit as well as the dual argument underlying its proof. Here, we first look at $\omega$-words with a finite orbit and show that, whenever such a word exists, there is also an ultimately periodic $\omega$-word with a finite orbit. We will see that this word can be assumed to be periodic if the automaton is reversible but that this does not hold in the general case. Finally, we look at the class of groups generated by invertible, reversible but not bi-reversible (i.\,e.\ not co-deterministic with respect to the output) automata. Here, we obtain that they always admit periodic $\omega$-words (of a certain form) with infinite orbits and that, if the dual automaton is additionally connected, all $\omega$-words have infinite orbits. For semigroups generated by reversible but not bi-reversible automata, we will see, however, that this is not true: they can be infinite while the orbits of all (ultimately) periodic words are finite. This also shows that the result about the existence of a word with an infinite orbit (if the generated semigroup is infinite) cannot be extended to periodic or ultimately periodic words.
  \end{section}

  \begin{section}{Preliminaries}\label{sec:preliminaries}
    \paragraph{Fundamentals, Words and Languages.}
    Let $A$ and $B$ be sets. We write $A \sqcup B$ for their disjoint union and, for a \emph{partial function} from $A$ to $B$, we write $A \to_p B$. If the function is total, we omit the index $p$. Furthermore, we use $\mathbb{N}$ to denote the set of natural numbers including $0$.
    
    A non-empty, finite set $\Sigma$ is called an \emph{alphabet}, its elements are called \emph{letters} and finite or right-infinite sequence over $\Sigma$ are called \emph{finite words} and \emph{$\omega$-words}, respectively. A \emph{word} can be a finite word or an $\omega$-word. The set of all finite word over $\Sigma$ -- including the empty word $\varepsilon$ -- is $\Sigma^*$ and $\Sigma^+$ is $\Sigma^* \setminus \{ \varepsilon \}$. The length of a finite word $w = a_1 \dots a_{\ell}$ with $a_1, \dots, a_{\ell} \in \Sigma$ is $|w| = \ell$. Finally, the \emph{reverse} of a finite word $w = a_1 \dots a_{\ell}$ with $a_1, \dots, a_{\ell} \in \Sigma$ is $\rev{w} = a_{\ell} \dots a_1$.
    \enlargethispage{1\baselineskip}
    
    The set of all $\omega$-words over an alphabet $\Sigma$ is $\Sigma^\omega$. An $\omega$-word is called \emph{ultimately periodic} if it is of the form $uv^\omega$ for some $u \in \Sigma^*$ and $v \in \Sigma^+$ where $v^\omega = v v \dots$; it is called \emph{periodic} if it is of the form $v^\omega$ for $v \in \Sigma^+$. We can also take the reverse of an $\omega$-word $\alpha = a_1 a_2 \dots$ with $a_1, a_2, \dots \in \Sigma$ to obtain the \emph{left-infinite sequence} $\rev{\alpha} = \dots a_2 a_1$ over $\Sigma$.
    
    A word $u$ is called a \emph{suffix} of another word $w$ if there is some finite word $x$ with $w = xu$. Symmetrically, $u$ is a \emph{prefix} of $w$ if there is a word $x$ with $w = ux$. A \emph{language} $L$ is a set of words over some alphabet $\Sigma$. It is \emph{suffix-closed} if $w \in L$ implies that every suffix of $w$ is in $L$ as well and it is \emph{prefix-closed} if $w \in L$ implies that every prefix of $w$ is also in $L$. By $\Pre w$, we denote the set of finite prefixes of a word $w$ and $\Pre L$ for a language $L$ is $\Pre L = \cup_{w \in L} \Pre w$. Symmetrically, we define $\Suf w$ and $\Suf L$ for the finite suffixes of a finite word or left-infinite sequence $w$ and a set $L$ of finite words and left-infinite sequences.
    
    For two languages $K$ and $L$ of finite words, we let $KL = \{ uv \mid u \in K, v \in L \}$. Furthermore, we define $L^* = \{ w_1 \dots w_i \mid i \in \mathbb{N}, w_1, \dots, w_i \in L \}$ and sometimes simply write $w$ for the singleton language $\{ w \}$. Additionally, we lift operators on words to languages; for example, we let $\partial L = \{ \partial w \mid w \in L \}$.
    
    \paragraph{Semigroups, Groups and Torsion.}
    We assume the reader to be familiar with basic notions from semigroup and group theory such as inverses (in the group sense) and generating sets. If a semigroup $S$ or a monoid $M$ is generated by a (finite) set $Q$, then there is a natural epimorphism from $Q^+$ to $S$ or from $Q^*$ to $M$. In this case, we write $\bm{q}$ \emph{in $S$} or \emph{in $M$} for the image of $\bm{q} \in Q^+$ or $\bm{q} \in Q^*$. Similarly, for $K \subseteq Q^*$, we write $K$ \emph{in $S$} or \emph{in $M$} for the image of $K$ under this homomorphism. Additionally, we use some natural variations for this notation. For example, we write $\bm{p} = \bm{q}$ in $S$ if $\bm{p}$ and $\bm{q}$ have the same image under the natural homomorphism.
    
    An element $s$ of a semigroup $S$ has \emph{torsion} if there are $i, j \geq 1$ with $i \neq j$ but $s^i = s^j$. If $S = G$ is even a group, this is connected to the \emph{order} of a group element $g \in G$: it is the smallest number $i \geq 1$ such that $g^i$ is the neutral element of the group; if there is no such $i$, then the element has infinite order. Obviously, an element of a group is of finite order if and only if it has torsion.
    
    \paragraph{Automata.}
    The most important objects in this paper are automata, which are more precisely described as finite-state, letter-to-letter transducers. Formally, an \emph{automaton} is a triple $\mathcal{T} = (Q, \Sigma, \delta)$ where $Q$ is a set of \emph{states}, $\Sigma$ is an alphabet and $\delta \subseteq Q \times \Sigma \times \Sigma \times Q$ is a set of \emph{transitions}. For a transition $(p, a, b, q) \in Q \times \Sigma \times \Sigma \times Q$, we use a more graphical notation and denote it by $\trans{p}{a}{b}{q}$ or, when depicting an entire automaton, by
    \begin{center}
      \begin{tikzpicture}[baseline=(p.base), auto, >=latex]
        \node[state] (q) {$p$};
        \node[state, right=of q] (p) {$q$};
        \draw[->] (q) edge node {$a / b$} (p);
      \end{tikzpicture}.
    \end{center}
    
    An automaton $\mathcal{T} = (Q, \Sigma, \delta)$ is \emph{complete} if
    \[
      d_{p, a} = \left| \{ \trans{p}{a}{b}{q} \in \delta \mid b \in \Sigma, q \in Q \} \right|
    \]
    is at least one for every $p \in Q$ and $a \in \Sigma$. If, on the other hand, all $d_{p, a}$ are at most one, then $\mathcal{T}$ is \emph{deterministic}. Additionally, $\mathcal{T}$ is \emph{reversible} if it is co-deterministic with respect to the input, i.\,e.\ if
    \[
      \{ \trans{p}{a}{b}{q} \in \delta \mid p \in Q, b \in \Sigma \}
    \]
    contains at most one element for every $a \in \Sigma$ and $q \in Q$ and it is \emph{inverse-reversible} if it is co-deterministic with respect to the output, i.\,e.\ if
    \[
      \{ \trans{p}{a}{b}{q} \in \delta \mid p \in Q, a \in \Sigma \}
    \]
    contains at most one element for every $b \in \Sigma$ and $q \in Q$. An automaton that is both, reversible and inverse-reversible is called \emph{bi-reversible}.
    
    Another way of depicting transitions in automata are \emph{cross diagrams}. We write
    \begin{center}
      \begin{tikzpicture}[baseline=(m-3-2.base)]
        \matrix[matrix of math nodes, text height=1.25ex, text depth=0.25ex] (m) {
            & a & \\
          p & & q \\
            & b & \\
        };
        \foreach \i in {1} {
          \draw[->] let
            \n1 = {int(2+\i)}
          in
            (m-2-\i) -> (m-2-\n1);
          \draw[->] let
            \n1 = {int(1+\i)}
          in
            (m-1-\n1) -> (m-3-\n1);
        };
      \end{tikzpicture}.
    \end{center}
    to indicate that an automaton $\mathcal{T} = (Q, \Sigma, \delta)$ contains the transition $\trans{p}{a}{b}{q} \in \delta$. We can combine multiple transitions into a single cross diagrams. For example, the cross diagram
    \begin{center}
      \begin{tikzpicture}
        \matrix[matrix of math nodes, text height=1.25ex, text depth=0.25ex] (m) {
                   & a_{0, 1}     &          & \dots &              & a_{0, m}     &     \\
          q_{1, 0} &              & q_{1, 1} & \dots & q_{1, m - 1} &              & q_{1, m} \\
                   & a_{1, 1}     &          &       &              & a_{1, m}     &     \\
            \vdots & \vdots       &          &       &              & \vdots       & \vdots \\
                   & a_{n - 1, 1} &          &       &              & a_{n - 1, m} &     \\
          q_{n, 0} &              & q_{n, 1} & \dots & q_{n, m - 1} &              & q_{n, m} \\
                   & a_{n, 1}     &          & \dots &              & a_{n, m}     &     \\
        };
        \foreach \j in {1, 5} {
          \foreach \i in {1, 5} {
            \draw[->] let
              \n1 = {int(2+\i)},
              \n2 = {int(1+\j)}
            in
              (m-\n2-\i) -> (m-\n2-\n1);
            \draw[->] let
              \n1 = {int(1+\i)},
              \n2 = {int(2+\j)}
            in
              (m-\j-\n1) -> (m-\n2-\n1);
          };
        };
      \end{tikzpicture}
    \end{center}
    states that the automaton contains all transitions $\trans{q_{i, j - 1}}{a_{i - 1, j}}{a_{i, j}}{q_{i, j}}$ for $1 \leq i \leq n$ and $1 \leq j \leq m$. Sometimes, we will omit intermediate states or letters if we do not need to assign them a name. Instead of always drawing complete cross diagrams, we also introduce a shot-hand notation where we do not only allow states and letters but also state sequences and words. For example, the above cross diagram can be abbreviate by
    \begin{center}
      \begin{tikzpicture}
        \matrix[matrix of math nodes, text height=1.25ex, text depth=0.25ex] (m) {
                                           & u = a_{0, 1} \dots a_{0, m} & \\
          q_{n, 0} \dots q_{1, 0} = \bm{q} &                             & \bm{p} = q_{n, m} \dots q_{1, m} \\
                                           & v = a_{n, 1} \dots a_{n, m} & \\
        };
        \foreach \i in {1} {
          \draw[->] let
            \n1 = {int(2+\i)}
          in
            (m-2-\i) -> (m-2-\n1);
          \draw[->] let
            \n1 = {int(1+\i)}
          in
            (m-1-\n1) -> (m-3-\n1);
        };
      \end{tikzpicture}.
    \end{center}
    It is important to note the ordering of the state sequences here: $q_{n, 0}$ belongs to the last transition but is written leftmost while $q_{1, 0}$ belongs to the first transition and is written rightmost.\footnote{This seemingly wrong ordering is justified here because we will define automaton semigroups and groups using left actions later on.}
    
    \paragraph{Automaton Semigroups.}
    For a deterministic automaton $\mathcal{T} = (Q, \Sigma, \delta)$, we can define a partial left action of $Q^*$ on $\Sigma^*$ and a partial right action of $\Sigma^*$ on $Q^*$ using cross diagrams. Since the automaton is deterministic, there is at most one $v \in \Sigma^+$ and at most one $\bm{q} \in Q^+$ for every $u \in \Sigma^+$ and every $\bm{p} \in Q^+$ such that the cross diagram
    \begin{center}
      \begin{tikzpicture}[baseline=(m-3-2.base)]
        \matrix[matrix of math nodes, text height=1.25ex, text depth=0.25ex] (m) {
                 & u & \\
          \bm{p} &   & \bm{q} \\
                 & v & \\
        };
        \foreach \i in {1} {
          \draw[->] let
            \n1 = {int(2+\i)}
          in
            (m-2-\i) -> (m-2-\n1);
          \draw[->] let
            \n1 = {int(1+\i)}
          in
            (m-1-\n1) -> (m-3-\n1);
        };
      \end{tikzpicture}
    \end{center}
    holds. In this case, we define the left partial action of $\bm{p}$ on $u$ as $\bm{p} \circ_{\mathcal{T}} u = v$ and the right partial action of $u$ on $\bm{p}$ as $\bm{p} \cdot_{\mathcal{T}} u = \bm{q}$. If there are no such $\bm{q}$ and $v$, we let $\bm{p} \circ_{\mathcal{T}} u$ and $\bm{p} \cdot_{\mathcal{T}} u$ be undefined. Additionally, we let $\bm{p} \circ_{\mathcal{T}} \varepsilon = \varepsilon$, $\varepsilon \circ_{\mathcal{T}} u = u$, $\bm{p} \cdot_{\mathcal{T}} \varepsilon = \bm{p}$ and $\varepsilon \cdot_{\mathcal{T}} \bm{p} = \varepsilon$. With this definition, it is easy to see that we have $\bm{q} \circ_{\mathcal{T}} \bm{p} \circ_{\mathcal{T}} u = \bm{qp} \circ_{\mathcal{T}} u$ and $u \cdot_{\mathcal{T}} \bm{p} \cdot \bm{q} = u \cdot_{\mathcal{T}} \bm{pq}$. Whenever the automaton $\mathcal{T}$ is clear form the context, we simply write $\bm{p} \circ u$ and $\bm{p} \cdot u$ instead of $\bm{p} \circ_{\mathcal{T}} u$ and $\bm{p} \cdot_{\mathcal{T}} u$.
    
    Now, every $\bm{p} \in Q^*$ induces a partial, length-preserving function $\bm{p} \circ{}\!: \Sigma^* \to_p \Sigma^*$ which maps every $u$ to $\bm{p} \circ u$. These partial functions are prefix-compatible in the sense that we have $\bm{p} \circ u_1 u_2 = (\bm{p} \circ u_1) v_2$ for some $v_2 \in \Sigma^*$ (whenever the partial action is defined on a word $u_1 u_2$). Naturally, we can extend $\bm{p} \circ{}\!$ into a partial function $\Sigma^* \cup \Sigma^\omega \to_p \Sigma^* \cup \Sigma^\omega$: $\alpha = a_1 a_2 \dots$ with $a_1, a_2, \dots \in \Sigma$ gets mapped to $b_1 b_2 \dots$ where the $b_1, b_2, \dots \in \Sigma$ are defined by $b_1 \dots b_m = \bm{p} \circ a_1 \dots a_m$ (if $\bm{p} \circ{}\!$ is undefined on some prefix of $\alpha$, then $\bm{p} \circ{}\!$ obviously should also be undefined on $\alpha$).
    
    In the same way, we can also define the partial, length-preserving functions ${}\! \cdot u: Q^* \to_p Q^*$ with $u \in \Sigma^*$ which map $\bm{p}$ to $\bm{p} \cdot u$ and observe that they have similar properties as the maps $\bm{p} \circ{}\!$.
    
    The \emph{semigroup $\mathscr{S}(\mathcal{T})$ generated} by the deterministic automaton $\mathcal{T}$ is the set $Q^+ \circ{}\! = \{ \bm{q} \circ{}\! \mid \bm{q} \in Q^+ \}$ with the composition of partial functions as its operation. This semigroup is generated by $Q \circ{}\! = \{ q \circ{}\! \mid q \in Q \}$. To emphasize the fact, that they generate semigroups, we will use the name \SAuta for deterministic automata from now on. An \emph{automaton semigroup} is a semigroup generated by some \SAut.
    
    \begin{remark}
      We want to point out that we do not require an \SAut to be complete. If an automaton semigroup is generated by a complete automaton, we call it a \emph{complete automaton semigroup} to emphasize this. It is not unknown whether the class of complete automaton semigroups and the class of (partial) automaton semigroups coincide (see \cite{structurePart} for a discussion).
      
      Clearly, if $\mathcal{T} = (Q, \Sigma, \delta)$ is a complete \SAut, $\bm{p} \circ u$ and $\bm{p} \cdot u$ are defined for all $\bm{p} \in Q^*$ and all $u \in \Sigma^*$ and all functions $\bm{q} \circ{}\!$ are total.
    \end{remark}
    
    For an \SAut $\mathcal{T} = (Q, \Sigma, \delta)$, the partial action of $\Sigma^*$ on $Q^*$ is compatible with the structure of the generated semigroup as we have $\bm{p} \circ{}\! = \bm{q} \circ{}\! \implies \bm{p} \cdot u \circ{}\! = \bm{q} \cdot u \circ{}\!$ (or both undefined) for all $\bm{p}, \bm{q} \in Q^*$ and $u \in \Sigma^*$ (which can be seen easily). Accordingly, we can define a partial action of $\Sigma^*$ on $\mathscr{S}(\mathcal{T})$: for an element $s = \bm{q} \circ{}\! \in \mathscr{S}(\mathcal{T})$ with $\bm{q} \in Q^+$, we let $s \cdot u = \bm{p} \cdot u \circ{}\!$ for $u \in \Sigma^*$.
    
    \paragraph{Automaton Groups and Inverse Automata.}\enlargethispage{1\baselineskip}
    An automaton $\mathcal{T} = (Q, \Sigma, \delta)$ is called \emph{invertible} if the sets
    \[
      \{ \trans{p}{a}{b}{q} \mid a \in \Sigma, q \in Q \}
    \]
    contain at most one element for all $p \in Q$ and $b \in \Sigma$. If a complete \SAut $\mathcal{T} = (Q, \Sigma, \delta)$ is invertible, all functions $\bm{p} \circ{}\!$ with $\bm{p} \in Q^*$ are bijections (and, in particular, total). In this case, we define $Q^{-1} = \{ q^{-1} \mid q \in Q \}$ as a disjoint copy of $Q$ and let $Q^{\pm*} = (Q \sqcup Q^{-1})^*$. We can extend the action of $Q^*$ on $\Sigma^*$ (and $\Sigma^\omega$) into an action of $Q^{\pm*}$ on $\Sigma^*$ (and $\Sigma^\omega$) by letting $q^{-1} \circ u$ be given by the pre-image of $u$ under $q^{-1} \circ{}\!$.
    
    The \emph{group $\mathscr{G}(\mathcal{T})$ generated} by an complete and invertible \SAut $\mathcal{T} = (Q, \Sigma, \delta)$ is $Q^{\pm *} \circ{}\! = \{ \bm{q} \circ{}\! \mid \bm{q} \in Q^{\pm *} \}$ with the composition of functions as its operation and such an automaton is called a \GAut. A group generated by some \GAut is an \emph{automaton group}.
    
    The group $\mathscr{G}(\mathcal{T})$ generated by a \GAut $\mathcal{T} = (Q, \Sigma, \delta)$ is also an automaton semigroup. It is the semigroup generated by the automaton $\mathcal{T}' = (Q \sqcup Q^{-1}, \Sigma, \delta \cup \delta^{-1})$ where we let
    \[
      \delta^{-1} = \{ \trans{p^{-1}}{b}{a}{q^{-1}} \mid \trans{p}{a}{b}{q} \in \delta \} \text{.}
    \]
    The automaton $\mathcal{T}^{-1} = (Q^{-1}, \Sigma, \delta^{-1})$ is the \emph{inverse automaton} of $\mathcal{T}$.
    
    \begin{example}\label{ex:addingMachine}
      The typical example of an automaton is generated by the \emph{adding machine}
      \begin{center}
        \begin{tikzpicture}[auto, shorten >=1pt, >=latex, baseline=(id.base)]
          \node[state] (q) {$q$};
          \node[state, right=of q] (id) {$\id$};
          \draw[->] (q) edge[loop left] node {$1/0$} (q)
                        edge node {$0/1$} (id)
                    (id) edge[loop right] node[align=left] {$0/0$\\$1/1$} (id);
        \end{tikzpicture},
      \end{center}
      which we denote by $\mathcal{T} = (\{ q, \id \}, \{ 0, 1 \}, \delta)$ in this example. It is deterministic, complete and invertible and the action of $\id$ is obviously the identity mapping on $\Sigma^*$. To understand the action of $q$, we observe that we have $q \circ 000 = 100$, $q \circ 100 = 010$ and $q \circ 010 = 110$. Thus, if we interpret a word $u \in \{ 0, 1 \}^*$ as the reverse/least-significant bit first binary representation of a natural number $n$, then $q \circ{}\!$ maps $u$ to the reverse/least-significant bit first binary representation of $n + 1$ (with appropriately many leading zeros). Therefore, the element $q \circ{}\!$ of the semigroup $\mathscr{S}(\mathcal{T})$ can be identified with plus one in the monoid of natural numbers with addition as operation; accordingly, $q^i \circ{}\!$ is plus $i$. Since we also have the identity as a state, the semigroup $\mathscr{S}(\mathcal{T})$ generated by $\mathcal{T}$ is isomorphic to $\mathbb{N}$ (with addition and including zero) or -- in different words -- the free monoid of rank one.
      
      Since the automaton is complete and invertible, we can also consider the group $\mathscr{G}(\mathcal{T})$ generated by it. The inverse of $q \circ{}\!$ can, obviously, be identified with minus one and we obtain that $\mathscr{G}(\mathcal{T})$ is isomorphic to the free group of rank one or the set of integers with addition as operation.
    \end{example}
    
    \paragraph{Dual Automaton and the Dual Action.}
    The \emph{dual} of an automaton $\mathcal{T} = (Q, \Sigma, \delta)$ is the automaton $\partial \mathcal{T} = (\Sigma, Q, \partial \delta)$ with
    \[
      \partial \delta = \{ \trans{a}{p}{q}{b} \mid \trans{p}{a}{b}{q} \in \delta \} \text{,}
    \]
    i.\,e.\ we swap the roles of $Q$ and $\Sigma$. To obtain a cross diagram for $\partial \mathcal{T}$ from one for $\mathcal{T}$, we have to mirror it at the north-west to south-east diagonal, i.\,e.\ we have the following equivalence of cross diagrams:
    \begin{center}
      \begin{tikzpicture}[baseline=(m-4-1.base)]
        \matrix[matrix of math nodes, text height=1.25ex, text depth=0.25ex] (m) {
                 & a_1 &    & \dots &    & a_m &     \\
          p_1    &     & {} & \dots & {} &     & q_1 \\
                 & {}  &    &       &    & {}  &     \\[-1.25ex]
          \vdots & \vdots &    &       &    & \vdots & \vdots \\[-1.5ex]
                 & {}  &    &       &    & {}  &     \\
          p_n    &     & {} & \dots & {} &     & q_n \\
                 & b_1 &    & \dots &    & b_m &     \\
        };
        \foreach \j in {1, 5} {
          \foreach \i in {1, 5} {
            \draw[->] let
              \n1 = {int(2+\i)},
              \n2 = {int(1+\j)}
            in
              (m-\n2-\i) -> (m-\n2-\n1);
            \draw[->] let
              \n1 = {int(1+\i)},
              \n2 = {int(2+\j)}
            in
              (m-\j-\n1) -> (m-\n2-\n1);
          };
        };
      \end{tikzpicture} in $\mathcal{T} \iff {}$
      \begin{tikzpicture}[baseline=(m-4-1.base)]
        \matrix[matrix of math nodes, text height=1.25ex, text depth=0.25ex] (m) {
                 & p_1 &    & \dots &    & p_n &     \\
          a_1    &     & {} & \dots & {} &     & b_1 \\
                 & {}  &    &       &    & {}  &     \\[-1.25ex]
          \vdots & \vdots &    &       &    & \vdots & \vdots \\[-1.5ex]
                 & {}  &    &       &    & {}  &     \\
          a_m    &     & {} & \dots & {} &     & b_m \\
                 & q_1 &    & \dots &    & q_n &     \\
        };
        \foreach \j in {1, 5} {
          \foreach \i in {1, 5} {
            \draw[->] let
              \n1 = {int(2+\i)},
              \n2 = {int(1+\j)}
            in
              (m-\n2-\i) -> (m-\n2-\n1);
            \draw[->] let
              \n1 = {int(1+\i)},
              \n2 = {int(2+\j)}
            in
              (m-\j-\n1) -> (m-\n2-\n1);
          };
        };
      \end{tikzpicture} in $\partial \mathcal{T}$.
    \end{center}
    If we let $\bm{p} = p_n \dots p_1$ and $\bm{q} = q_n \dots q_1$ as well as $u = a_1 \dots a_m$ and $v = b_1 \dots b_m$, we can write the above equivalence in short-hand notation:\enlargethispage{2\baselineskip}
    \begin{center}
      \begin{tikzpicture}[baseline=(m-2-3.base)]
        \matrix[matrix of math nodes, text height=1.25ex, text depth=0.25ex] (m) {
                 & u & \\
          \bm{p} &   & \bm{q} \\
                 & v & \\
        };
        \foreach \i in {1} {
          \draw[->] let
            \n1 = {int(2+\i)}
          in
            (m-2-\i) -> (m-2-\n1);
          \draw[->] let
            \n1 = {int(1+\i)}
          in
            (m-1-\n1) -> (m-3-\n1);
        };
      \end{tikzpicture} in $\mathcal{T} \iff {}$
      \begin{tikzpicture}[baseline=(m-2-3.base)]
        \matrix[matrix of math nodes, text height=1.25ex, text depth=0.25ex] (m) {
                  & \rev{\bm{p}} & \\
          \rev{u} &              & \rev{v} \\
                  & \rev{\bm{q}} & \\
        };
        \foreach \i in {1} {
          \draw[->] let
            \n1 = {int(2+\i)}
          in
            (m-2-\i) -> (m-2-\n1);
          \draw[->] let
            \n1 = {int(1+\i)}
          in
            (m-1-\n1) -> (m-3-\n1);
        };
      \end{tikzpicture} in $\partial \mathcal{T}$.
    \end{center}
    
    Clearly, taking the dual of an automaton is an involution and the dual of a deterministic (complete) automaton is also deterministic (complete). Additionally, the dual of an invertible automaton is reversible. Thus, the dual of an \SAut is an \SAut and the dual of a \GAut is a complete and reversible \SAut (and vice versa).
    
    Therefore, if $\mathcal{T} = (Q, \Sigma, \delta)$ is an \SAut, $\mathcal{T}$ itself induces the actions $\circ_{\mathcal{T}}$ and $\cdot_{\mathcal{T}}$ and its dual induces the actions $\circ_{\partial \mathcal{T}}$ and $\cdot_{\partial \mathcal{T}}$, which we simply write as $\circ_{\partial}$ and $\cdot_{\partial}$ if the automaton is clear from the context. Because of the above equivalence of cross diagrams, there is a strong connection between $\circ$ and $\cdot_{\partial}$ (and, equivalently, between $\circ_{\partial}$ and $\cdot$). We have
    \[
      \rev{u} \circ_{\partial} \rev{\bm{p}} = \rev{(\bm{p} \cdot u)} \text{ (or both undefined)}
    \]
    for all $u \in \Sigma^*$ and $\bm{p} \in Q^*$. Here, we have used the convention that $\partial$ has higher precedence than the two automaton actions to avoid parentheses; for example, $\partial \bm{p} \circ u$ is to be understood as $(\partial \bm{p}) \circ u$ instead of $\partial (\bm{p} \circ u)$.
    
    Since the dual of a complete and reversible \SAut $\mathcal{T} = (Q, \Sigma, \delta)$ is a \GAut and we, thus, have that all $u \circ_{\partial} {}\!$ are bijections, we immediately obtain the following fact from the above connection.
    \begin{fact}\label{fact:reversibleIsBijection}
      Let $\mathcal{T} = (Q, \Sigma, \delta)$ be a complete and reversible \SAut. Then, all maps $\!{} \cdot u: Q^* \cup Q^\omega \to Q^* \cup Q^\omega, \bm{p} \mapsto \bm{p} \cdot u$ for $u \in \Sigma^*$ are length-preserving bijections.
    \end{fact}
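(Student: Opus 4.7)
The plan is to reduce this to the already-established fact that, for a complete and invertible \SAut, the state-induced maps on the words over the alphabet are bijections, and then transport the conclusion across the duality that the excerpt sets up just before the statement.

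First I would check that $\partial \mathcal{T}$ is a \GAut. Completeness passes from $\mathcal{T}$ to $\partial \mathcal{T}$ because taking duals preserves completeness (as noted in the preliminaries). Reversibility of $\mathcal{T}$, which says that for each pair $(a, q) \in \Sigma \times Q$ at most one transition $\trans{p}{a}{b}{q}$ exists, translates, after swapping the roles of states and letters, into the condition that in $\partial \mathcal{T}$ at most one transition leaves any state $a$ producing the output letter $q$; this is exactly invertibility of $\partial \mathcal{T}$. Determinism of $\partial \mathcal{T}$ holds because $\mathcal{T}$ is an \SAut and duality also preserves determinism. So $\partial \mathcal{T}$ is a complete, invertible \SAut, i.e.\ a \GAut.

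Next, for any $u \in \Sigma^*$, the map $\rev{u} \circ_{\partial}\!\colon Q^* \to Q^*$ is a length-preserving bijection, since all state-induced maps of a \GAut are bijections on the words over its alphabet (and length is preserved because the transducer is letter-to-letter). Via the equivalence of cross diagrams spelled out in the excerpt, namely
\[
  \rev{u} \circ_{\partial} \rev{\bm p} = \rev{(\bm p \cdot u)},
\]
conjugation by the reversal involution turns $\rev{u} \circ_{\partial}\!$ into the map $\bm p \mapsto \bm p \cdot u$ on $Q^*$. Since reversal is a bijection on $Q^*$, this establishes the finite-word case.

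Finally I would extend to $Q^{\omega}$ using the coherence of $\cdot u$. The map $\cdot u$ is prefix-compatible on state sequences in the same way $\bm p \circ{}\!$ is prefix-compatible on letter sequences, which is exactly what lets the excerpt extend $\bm p \circ{}\!$ to $\Sigma^\omega$; the analogous extension of $\cdot u$ to $Q^\omega$ is therefore a length-preserving (total, by completeness) map. Injectivity and surjectivity on $Q^\omega$ then follow from the corresponding statements for every finite prefix, which we have just proved. The only mildly technical point — and the one I would spend most care on — is checking that the reversal convention matches up cleanly with the direction of the infinite sequences, so that the bijectivity on finite words really does propagate to $Q^\omega$; but once the cross-diagram equivalence is used consistently, this is bookkeeping rather than a real obstacle.
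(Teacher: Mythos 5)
Your proposal is correct and follows essentially the same route as the paper: the paper likewise observes that the dual of a complete and reversible \SAut is a \GAut, so all maps $u \circ_{\partial}{}\!$ are bijections, and then reads off the claim from the identity $\rev{u} \circ_{\partial} \rev{\bm{p}} = \rev{(\bm{p} \cdot u)}$. Your extra care about extending to $Q^\omega$ via prefix-compatibility is a detail the paper leaves implicit, but it is handled correctly.
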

    
    \paragraph{Orbits.}
    Let $\mathcal{T} = (Q, \Sigma, \delta)$ be an \SAut and $u \in \Sigma^* \cup \Sigma^\omega$. For $K \subseteq Q^*$, the \emph{$K$-orbit} of $u$ is
    \[
      K \circ u = \{ \bm{q} \circ u \mid \bm{q} \in K, \bm{q} \circ u \text{ defined} \} \text{.}
    \]
    The \emph{orbit} of $u$ is its $Q^*$-orbit. The orbit $Q^* \circ u$ of a word $u$ has a natural graph structure: we have a $q$-labeled edge for $q \in Q$ from $\bm{p} \circ u$ to $q \bm{p} \circ u$ whenever $q \bm{p} \circ u$ is defined (where $\bm{p} \in Q^*$).
    
    If $\mathcal{T}$ is even a \GAut, we can define the $K$-orbit of $u$ for $K \subseteq Q^{\pm *}$ analogously. It is well-known that, for a \GAut $\mathcal{T} = (Q, \Sigma, \delta)$, the orbit of $u$ is infinite if and only if its $Q^{\pm *}$-orbit is infinite (see, e.\,g.\ \cite[Lemma~2.5]{decidabilityPart}).
    
    Recently, it could be shown that the existence of an $\omega$-word with an infinite orbit is equivalent to the algebraic property that an automaton semigroup (or group) is infinite.
    \begin{thm}[{\cite[Corollary~3.3]{orbitsPart}}]\label{cor:infiniteSemigroupsHaveInfiniteOrbits}
      The semigroup $\mathscr{S}(\mathcal{T})$ generated by some \SAut $\mathcal{T} = (Q, \Sigma, \delta)$ is infinite if and only if there is some $\omega$-word $\alpha \in \Sigma^\omega$ with an infinite orbit $Q^* \circ \alpha$.
    \end{thm}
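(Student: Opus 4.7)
The easy direction $(\Leftarrow)$ is immediate: if $|Q^{*}\circ\alpha|=\infty$, pick $\bm{q}_{1},\bm{q}_{2},\ldots\in Q^{+}$ with pairwise distinct $\omega$-images $\bm{q}_{i}\circ\alpha$; any two such images already differ on some finite prefix of $\alpha$, so the $\bm{q}_{i}\circ{}\!$ are pairwise distinct elements of $\mathscr{S}(\mathcal{T})$ and the semigroup is infinite.

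For $(\Rightarrow)$, I would argue by contrapositive: assuming $|Q^{*}\circ\alpha|<\infty$ for every $\alpha\in\Sigma^{\omega}$, deduce $|\mathscr{S}(\mathcal{T})|<\infty$. Set $f(u)=|Q^{*}\circ u|$ for $u\in\Sigma^{*}$. Prefix-compatibility and length-preservation of the left action imply that $\bm{p}\circ u\neq\bm{q}\circ u$ forces $\bm{p}\circ uv\neq\bm{q}\circ uv$ (whenever both are defined), so $f$ is non-decreasing along prefixes and $|Q^{*}\circ\alpha|=\sup_{n}f(\alpha_{n})$. The hypothesis therefore says $N(\alpha):=\sup_{n}f(\alpha_{n})$ is finite for every $\alpha\in\Sigma^{\omega}$. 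The set $\{\alpha:N(\alpha)\geq k\}=\bigcup_{u:\,f(u)\geq k}[u]$ is open in the Cantor space $\Sigma^{\omega}$, so $N$ is lower semi-continuous. Since $\Sigma^{\omega}$ is covered by the closed sets $\{N\leq k\}$, a Baire-category argument yields a cylinder $[u_{0}]$ on which $f$ is bounded, and compactness of $\Sigma^{\omega}$ (combined with the observation that every cylinder contains such a ``good'' sub-cylinder) extracts a finite subcover, producing a uniform bound $f\leq K$ on all of $\Sigma^{*}$.

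The main obstacle is converting this pointwise bound $f\leq K$ into a bound on $|\mathscr{S}(\mathcal{T})|$: a product of finitely-many bounded-cardinality sets is still unbounded, so the naive estimate fails. Here I would exploit the duality identity $\rev{u}\circ_{\partial}\rev{\bm{p}}=\rev{(\bm{p}\cdot u)}$. For any $u\in\Sigma^{*}$ the restriction of $\bm{q}\circ{}\!$ to the subtree $u\Sigma^{*}$ is determined by the pair $(\bm{q}\circ u,\,[\bm{q}\cdot u]_{\mathscr{S}(\mathcal{T})})$, whose first coordinate takes at most $K=f(u)$ values. A pigeonhole over these $K$ values together with the uniform bound on $f$ on extensions of $u$ controls the number of distinct residue classes $[\bm{q}\cdot u]_{\mathscr{S}(\mathcal{T})}$; a finite covering of $\Sigma^{*}$ by such cylinders then bounds $|\mathscr{S}(\mathcal{T})|$. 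I expect this last propagation step — trading ``infinitely many non-equivalent state sequences'' for ``infinitely many pairwise separated $\omega$-orbits'' via the dual correspondence — to be the technical heart of the argument, since one must carefully account for the interaction between the two coordinates of the pair and across different cylinders.
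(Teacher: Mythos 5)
Your direction $(\Leftarrow)$ is fine. In the direction $(\Rightarrow)$ there are two genuine gaps, and the first is fatal. The step ``Baire category plus compactness upgrades the pointwise hypothesis $N(\alpha)<\infty$ to a uniform bound $f\le K$ on all of $\Sigma^*$'' is false for monotone functions on $\Sigma^*$: take $\Sigma=\{0,1\}$ and set $f(0^k)=0$ and $f(0^k1v)=k$. This $f$ is non-decreasing along prefixes and unbounded, yet $\sup_n f(\alpha_n)<\infty$ for every branch $\alpha$ (it is $0$ on $0^\omega$ and $k$ on any branch through $0^k1$). Baire does give cylinders on which $N$ is bounded, and every cylinder contains one, but these local bounds need not be uniform, and the good cylinders cannot be completed to a cover of $\Sigma^\omega$ without including some $[0^m]$ on which $f$ is unbounded; an increasing union of closed sets covering a compact space need not stabilize. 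Since your topological argument uses nothing about the automaton beyond monotonicity, it cannot work: the implication ``all branch orbits finite $\Rightarrow$ $f$ bounded'' is true for automaton-induced $f$, but only because it is a consequence of the very theorem being proved (note that $f$ unbounded trivially forces $\mathscr{S}(\mathcal{T})$ infinite, so this implication is the contrapositive of the hard direction in the case of unbounded $f$). Separately, the identity $|Q^*\circ\alpha|=\sup_n f(\alpha_n)$ and the monotonicity of $f$ both presuppose completeness, whereas the statement is for possibly partial \SAuta, where state sequences can die along a branch and a K\H{o}nig-type argument (\autoref{prop:KSchreierAndKOrbit}) is already needed at this stage.

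The second gap is that the ``technical heart'' you announce, namely converting a uniform bound on orbit sizes into finiteness of $\mathscr{S}(\mathcal{T})$, is only described, not carried out: no mechanism is given for why bounding the first coordinate $\bm{q}\circ u$ should bound the number of classes $[\bm{q}\cdot u]$, and this is precisely where the difficulty you correctly identify (products of bounded sets are unbounded) has to be overcome. For comparison, this theorem is imported from \cite[Corollary~3.3]{orbitsPart} as the special case $K=Q^*$ of \autoref{thm:infiniteSubsetHasInfiniteOrbit}, and the cited proof avoids both problems by dualizing: $\mathscr{S}(\mathcal{T})$ infinite means $|Q^*/\Sigma^*|=\infty$; \autoref{prop:KLInfiniteIffLKInfinite} transfers this to $|\partial\Sigma^*/\partial Q^*|=\infty$; a K\H{o}nig argument on the tree $\Sigma^*$ selects a single branch $\alpha$ with $|\partial\Pre\alpha/\partial Q^*|=\infty$; and \autoref{prop:KLInfiniteIffLKInfinite} together with \autoref{prop:KSchreierAndKOrbit} converts this back into $|Q^*\circ\alpha|=\infty$. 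To salvage your outline you would in effect have to prove the duality statement of \autoref{prop:KLInfiniteIffLKInfinite} anyway; that, not Baire category, is where the work lies.
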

    
    In fact, this result is only a special case of a more general result.
    \begin{thm}[{\cite[Theorem~3.2]{orbitsPart}}]\label{thm:infiniteSubsetHasInfiniteOrbit}
      Let $\mathcal{T} = (Q, \Sigma, \delta)$ be some \SAut and let $K \subseteq Q^*$ be suffix-closed. The image of $K$ in $\mathscr{S}(\mathcal{T})$ is infinite if and only if there is some $\omega$-word $\alpha$ whose $K$-orbit $K \circ \alpha$ is infinite.
    \end{thm}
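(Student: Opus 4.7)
The ``if'' direction is immediate: if $K \circ \alpha$ is infinite, then infinitely many of the partial functions $\bm{q} \circ{}\!$ for $\bm{q} \in K$ must be distinct (being distinguished already by their values on $\alpha$), so the image of $K$ in $\mathscr{S}(\mathcal{T})$ is infinite.

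For the converse, assume the image of $K$ in $\mathscr{S}(\mathcal{T})$ is infinite and aim to construct an $\alpha \in \Sigma^\omega$ with $|K \circ \alpha| = \infty$. My first attempt would be an inductive construction: pick pairwise non-equivalent representatives $\bm{p}_1, \bm{p}_2, \ldots \in K$ and build finite words $u_1, u_2, \ldots \in \Sigma^*$, with $u_n$ a prefix of $u_{n+1}$, such that $u_n$ separates $\bm{p}_1, \ldots, \bm{p}_n$ in the sense that $\bm{p}_1 \circ u_n, \ldots, \bm{p}_n \circ u_n$ are pairwise distinct. Letting $\alpha$ be the unique $\omega$-word extending all the $u_n$ then yields $|K \circ \alpha| = \infty$. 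Separations already achieved persist under extension by the prefix-compatibility of the action, so at the inductive step one only has to break any coincidence $\bm{p}_{n+1} \circ u_n = \bm{p}_i \circ u_n$ for $i \leq n$.

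The main obstacle is precisely this: breaking such a coincidence requires the residuals $\bm{p}_{n+1} \cdot u_n$ and $\bm{p}_i \cdot u_n$ to still represent distinct elements of $\mathscr{S}(\mathcal{T})$, which does not follow automatically from $\bm{p}_{n+1}$ and $\bm{p}_i$ being distinct there --- a distinguishing word for the pair need not extend $u_n$. To circumvent this, one invokes the suffix-closedness of $K$ via the dual automaton $\partial \mathcal{T} = (\Sigma, Q, \partial \delta)$. The reverse language $\rev{K}$ is prefix-closed, and since the image of $K$ in $\mathscr{S}(\mathcal{T})$ is infinite, $K$ itself is infinite as a subset of $Q^*$; an infinite prefix-closed subset of the finite-branching tree $Q^*$ admits, by König's lemma, an infinite branch $\beta \in Q^\omega$ every finite prefix of which lies in $\rev{K}$. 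Via the cross-diagram equivalence $\rev{u} \circ_\partial \rev{\bm{p}} = \rev{\bm{p} \cdot u}$, this $\beta$ recasts the separation problem as a question about the $\Sigma^*$-orbit of $\beta$ in $\partial \mathcal{T}$, where the concrete structure of $\beta$ --- rather than an ad-hoc choice of representatives --- supplies the compatible extensions needed to construct the desired $\alpha$.
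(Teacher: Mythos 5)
First, a framing remark: the paper does not prove this statement itself; it is quoted from \cite{orbitsPart}, and the surrounding text makes clear that the intended route is through Proposition~\ref{prop:KSchreierAndKOrbit} ($|K\circ\alpha|=\infty\iff|K/\Pre\alpha|=\infty$) and the duality result Proposition~\ref{prop:KLInfiniteIffLKInfinite}. Your ``if'' direction is fine, and you correctly diagnose the obstruction in the naive inductive construction. The gap is in how you propose to overcome it. You apply K\"onig's lemma to the prefix-closed set $\rev{K}$ itself, using only that $K$ is infinite \emph{as a set of words}. The branch $\beta$ this produces carries no information about the semigroup: all of its finite prefixes could represent a single element of $\mathscr{S}(\mathcal{T})$ (take $K=Q^*$ for an automaton with an identity state $\id$; then $\beta=\id^\omega$ satisfies everything you require of it and is useless for separation). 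What is actually needed is an infinite simple path in a \emph{quotient} of this tree --- the finitely branching graph whose vertices are the relevant equivalence classes and whose edges are given by extension by a single letter --- and for that quotient graph to make sense one must verify that the equivalence is compatible with such extensions. That compatibility argument, which is where the suffix-closedness of $K$ and the dual action really enter, is absent from your proposal.

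Second, even granting a branch along which infinitely many distinct elements of $\mathscr{S}(\mathcal{T})$ occur, your final sentence only restates the problem in the dual automaton; it does not solve it. ``Recasting the separation problem as a question about the $\Sigma^*$-orbit of $\beta$ in $\partial\mathcal{T}$'' is precisely the content of Proposition~\ref{prop:KLInfiniteIffLKInfinite} together with Proposition~\ref{prop:KSchreierAndKOrbit} (compare \autoref{cor:alphaAndPiOrbit}), and passing from ``the dual orbit of $\beta$ is infinite'' back to a single $\omega$-word $\alpha$ with $|K\circ\alpha|=\infty$ requires once more extracting an infinite simple path from an infinite, finitely branching orbit graph and translating it through the cross-diagram mirror. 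None of these steps is carried out. As written, the proposal identifies the right ingredients (dualization, K\"onig's lemma, suffix-closedness) but does not assemble them into a proof of the hard direction.
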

    
    The proof of this result heavily relies on a dual argument. In fact, in the course of its proof, there appears a result which seems to be more fundamental than the actual end result and we will need this intermediate result directly for some of our proof below. In order to state it, we first need to introduce a concept that generalizes the concept of equality in an automaton semigroup. For an \SAut $\mathcal{T} = (Q, \Sigma, \delta)$ and a language $L \subseteq \Sigma^*$, we define the relation ${\equiv_{\mathcal{T}, L}} \subseteq Q^* \times Q^*$ by
    \[
      \bm{p} \equiv_{\mathcal{T}, L} \bm{q} \iff \forall u \in L: \bm{p} \circ u = \bm{q} \circ u \text{ (or both undefined)} \text{.}
    \]
    As is the case with the two automaton actions, we do not write the index $\mathcal{T}$ whenever the automaton is clear from the context. It is easy to verify that $\equiv_L$ is an equivalence relation for all languages $L \subseteq \Sigma^*$. We write $[\bm{p}]_L$ for the equivalence class of $\bm{p} \in Q^*$ under $\equiv_L$ and
    \[
      K / L = \{ [\bm{p}]_L \mid \bm{p} \in K \}
    \]
    for the set of classes with a representative in $K \subseteq Q^*$.
    The set $K / L$ is a generalization of a couple of other notions (including the automaton semigroup itself and co-sets with respect to stabilizers in the group case; see \cite{orbitsPart} for these examples) but, most importantly, it is closely related to the $K$-orbit of a word.
    \begin{prop}[{\cite[Proposition~2.4]{orbitsPart}}]\label{prop:KSchreierAndKOrbit}
      Let $\mathcal{T} = (Q, \Sigma, \delta)$ be an \SAut and let $K \subseteq Q^*$ be suffix-closed.
      For all $\alpha \in \Sigma^\omega$, we have
      \[
        | K / \Pre \alpha | = \infty \iff | K \circ \alpha | = \infty \text{.}
      \]
    \end{prop}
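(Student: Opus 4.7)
The plan is to prove the two directions separately; the non-trivial content lies in the backward implication, which I would prove by contrapositive.

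For the forward direction, I would fix for each $\beta \in K \circ \alpha$ a representative $\bm{p}_\beta \in K$ with $\bm{p}_\beta \circ \alpha = \beta$. Because the partial action is length-preserving and prefix-compatible, $\bm{p}_\beta \circ u$ is the length-$|u|$ prefix of $\beta$ for every $u \in \Pre \alpha$. Distinct $\beta, \beta' \in K \circ \alpha$ therefore yield representatives distinguished by the first prefix on which $\beta$ and $\beta'$ disagree, and so they lie in distinct $\equiv_{\Pre \alpha}$-classes; this gives the injection $K \circ \alpha \hookrightarrow K / \Pre \alpha$.

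For the backward direction I would assume $K \circ \alpha = \{\beta_1, \dots, \beta_m\}$ is finite and split $K$ into $K' = \{\bm{p} \in K : \bm{p} \circ \alpha \text{ defined}\}$ and $F = K \setminus K'$. The forward argument in fact shows that $[\bm{p}]_{\Pre \alpha} \mapsto \bm{p} \circ \alpha$ is a bijection from $K' / \equiv_{\Pre \alpha}$ onto $K \circ \alpha$, so $K'$ contributes exactly $m$ classes and the task reduces to bounding $|F / \equiv_{\Pre \alpha}|$. This is where suffix-closedness enters: since $\varepsilon \in K \cap K'$ (the case $K = \emptyset$ is trivial), every $\bm{p} = p_n \dots p_1 \in F$ admits a longest suffix $\bm{s}$ with $\bm{s} \circ \alpha$ defined. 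Writing $\bm{p} = \bm{r}\, q\, \bm{s}$ with $q \in Q$ a single state, suffix-closedness yields $\bm{s}, q\bm{s} \in K$; by maximality, $\bm{s} \in K'$, so $\bm{s} \circ \alpha = \beta_i$ for some $i$, while $q\bm{s} \in F$, so $q \circ \beta_i$ is undefined.

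The pair $(q, i)$ then ranges over the finite set of those for which $q \circ \beta_i$ is undefined, partitioning $F$ into finitely many subsets $F_{q, i}$. Fix one such pair and let $k_0$ denote the length of the longest defined prefix of $q \circ \beta_i$. For any $\bm{p} = \bm{r}\, q\, \bm{s} \in F_{q, i}$ and any prefix $u$ of $\alpha$ of length $k \leq k_0$, the image $\bm{p} \circ u$ equals $\bm{r}$ applied to the length-$k$ prefix of $q \circ \beta_i$, while for $k > k_0$ the image $\bm{p} \circ u$ is undefined. The entire signature of $\bm{p}$ along $\Pre \alpha$ is therefore captured by a chain of at most $k_0 + 1$ words in $\Sigma^{\leq k_0}$, of which there are only finitely many; hence $F_{q, i}$ contributes finitely many classes, and summing over the finite set of admissible pairs yields $|F / \equiv_{\Pre \alpha}| < \infty$.

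The main obstacle is isolating this use of suffix-closedness: it is precisely the decomposition $\bm{p} = \bm{r}\, q\, \bm{s}$ with $\bm{s} \in K'$ and $q\bm{s} \in F$ that reduces the infinitary problem for arbitrarily long $\bm{p}$ to the finitary one of $\bm{r}$'s action on a single bounded-length word, and it is crucial that $\bm{r}$ itself need not lie in $K$ — only $\bm{s}$ and $q\bm{s}$ need. A direct attempt via König's lemma on the tree of realized finite prefixes produces an infinite branch but no single element of $K$ realizing it, so it does not immediately yield an element of $K \circ \alpha$.
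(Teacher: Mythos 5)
The paper does not actually prove this proposition; it imports it verbatim as \cite[Proposition~2.4]{orbitsPart}, so there is no in-paper argument to compare yours against. Judged on its own, your proof is correct and complete. The easy injection $K \circ \alpha \hookrightarrow K / \Pre\alpha$ is right, and the substantive direction is handled properly: the only delicate point in this statement is the partiality of the action (for a complete automaton the map $[\bm{p}]_{\Pre\alpha} \mapsto \bm{p}\circ\alpha$ would already be a bijection and there would be nothing left to do), and your decomposition $\bm{p} = \bm{r}\,q\,\bm{s}$ at the longest suffix $\bm{s}$ on which $\alpha$ is still defined is exactly the step that needs suffix-closedness of $K$ and uses it correctly -- $\bm{s}$ and $q\bm{s}$ lie in $K$, forcing $\bm{s}\circ\alpha$ into the finite set $K\circ\alpha$ and $q\circ(\bm{s}\circ\alpha)$ to be undefined, while $\bm{r}$ need not lie in $K$ at all. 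The resulting signature argument (each $\bm{p} \in F_{q,i}$ acts on $\Pre\alpha$ through the finitely many defined prefixes $q\circ(\bm{s}\circ u)$ of bounded length, so only finitely many $\equiv_{\Pre\alpha}$-classes can arise) is sound; one could remark that by prefix-compatibility the signature is even determined by its last defined entry, but finiteness is all that is needed. Your closing observation about why a König-type argument fails to produce an element of $K\circ\alpha$ is also accurate and explains why the detour through $\equiv_{\Pre\alpha}$-classes is the right formulation.
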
\noindent
    In particular, this result also holds for non-complete \SAuta.
    
    For an \SAut $\mathcal{T} = (Q, \Sigma, \delta)$ and a set $K \subseteq Q^*$, we also have the equivalence $\equiv_{K, \partial \mathcal{T}}$ belonging to the dual of $\mathcal{T}$. We also simply write $\equiv_K$ for this relation and $L / K$ for the classes of $\equiv_K$ with a representative in $L \subseteq \Sigma^*$. That these are to be understood with respect to $\partial \mathcal{T}$ (and not with respect to $\mathcal{T}$ itself) can be seen from the fact that $K$ is a set of state sequences of $\mathcal{T}$.
    
    The above-mentioned result shows a close relation between the sets $K / L$ and $L / K$:
    \begin{prop}[{\cite[Proposition~3.1]{orbitsPart}}]\label{prop:KLInfiniteIffLKInfinite}
      Let $\mathcal{T} = (Q, \Sigma, \delta)$ be an \SAut, let $K \subseteq Q^*$ be suffix-closed and let $L \subseteq \Sigma^*$ be prefix-closed. Then, we have:
      \[
      | K / L | = \infty \iff | \partial L / \partial K | = \infty
      \]
    \end{prop}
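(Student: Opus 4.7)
The plan is to establish the quantitative statement that $|\partial L / \partial K| \leq N < \infty$ implies $|K/L| \leq (|\Sigma|+1)^{N |\Sigma|}$. This gives one implication of the equivalence, and the other follows by applying the same statement to the dual automaton $\partial \mathcal{T}$ with $K, L$ replaced by $\partial L, \partial K$: the prefix-closedness of $\partial K$ required for that application corresponds exactly to the suffix-closedness of $K$, so both hypotheses are used in a symmetric way.

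First I would unpack $|\partial L / \partial K|$ using the cross-diagram identity $\partial u \circ_\partial \partial \bm{p} = \partial (\bm{p} \cdot u)$, from which $\partial u \equiv_{\partial K} \partial v$ is equivalent to $\bm{p} \cdot u = \bm{p} \cdot v$ (or both undefined) for every $\bm{p} \in K$. So $|\partial L / \partial K|$ counts the distinct partial functions $\phi_u \colon K \to_p Q^*$, $\bm{p} \mapsto \bm{p} \cdot u$, for $u \in L$. Assuming there are $N$ such functions, I fix representatives $u_1, \ldots, u_N \in L$, set $f_i = \phi_{u_i}$, and write $c(u) \in \{1, \ldots, N\}$ for the unique index with $\phi_u = f_{c(u)}$.

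The core of the argument is a finite encoding of the $\equiv_L$-class of $\bm{p}$ by a family of single-letter tests. For every pair $(i, a) \in \{1, \ldots, N\} \times \Sigma$ define $h_{i, a}(\bm{p}) = f_i(\bm{p}) \circ a = (\bm{p} \cdot u_i) \circ a$, whose values lie in $\Sigma \cup \{\bot\}$, so the tuple $(h_{i, a}(\bm{p}))_{(i,a)}$ ranges over a set of size $(|\Sigma| + 1)^{N |\Sigma|}$. The decisive observation is that this tuple determines the entire function $u \mapsto \bm{p} \circ u$ on $L$: for any $u = a_1 \cdots a_m \in L$, the incremental cross-diagram computation yields
\[
  (\bm{p} \circ u)_j = (\bm{p} \cdot a_1 \cdots a_{j-1}) \circ a_j = h_{c(a_1 \cdots a_{j-1}),\, a_j}(\bm{p}),
\]
where the prefix-closedness of $L$ guarantees $a_1 \cdots a_{j-1} \in L$, so the class index $c(a_1 \cdots a_{j-1})$ is well defined. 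Hence any $\bm{p}, \bm{q}$ with identical $h$-tuples satisfy $\bm{p} \circ u = \bm{q} \circ u$ (or both undefined) for every $u \in L$, i.\,e.\ $\bm{p} \equiv_L \bm{q}$, bounding $|K/L|$ by the tuple count.

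The hard part, which I expect is the main obstacle, is to recognise the correct encoding: the insight that the intermediate state $\bm{p} \cdot a_1 \cdots a_{j-1}$ appearing at step $j$ is pinned down, up to $\equiv_{\partial K}$-class, by a single index in $\{1, \ldots, N\}$, so that each output letter becomes a function of a finite, $\bm{p}$-indexed datum depending only on the class of the consumed prefix and the current letter. Once this is isolated, the finite bound is immediate. Treating the partial case only requires regarding $\bot$ as a further value each $h_{i, a}$ may take; if $h_{c(u_{<j}),\, a_j}(\bm{p}) = \bot$ then the cascade shows $\bm{p} \circ u$ and $\bm{q} \circ u$ are simultaneously undefined, matching the ``or both undefined'' clause in the definition of $\equiv_L$.
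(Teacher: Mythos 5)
Your proof is correct. The finite encoding of the $\equiv_L$-class of $\bm{p} \in K$ by the table $(i,a) \mapsto (\bm{p} \cdot u_i) \circ a$ over class representatives $u_i$ of $\partial L / \partial K$ — with prefix-closedness of $L$ guaranteeing that every prefix $u_{<j}$ has a well-defined class index, and with the $\partial$-symmetry supplying the converse implication — is essentially the same argument as the cited proof of this proposition.
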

    
    An important special case of \autoref{prop:KLInfiniteIffLKInfinite} is when $K$ and $L$ are both given by a single $\omega$-word. Combined with \autoref{prop:KSchreierAndKOrbit}, this case yields the following duality result for orbits (which we will also use below).
    \begin{cor}[{\cite[Corollary~3.11]{orbitsPart}}]\label{cor:alphaAndPiOrbit}
      Let $\mathcal{T} = (Q, \Sigma, \delta)$ be an \SAut and let $\pi \in Q^\omega$ and $\alpha \in \Sigma^\omega$. Then, we have
      \[
        |\partial \Pre \pi \circ \alpha| = \infty \iff |\partial \Pre \alpha \circ_\partial \pi| = \infty
      \]
    \end{cor}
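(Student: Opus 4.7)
The plan is to obtain the corollary by simply chaining together the three results already set up just before the statement, namely \autoref{prop:KSchreierAndKOrbit} (used twice, once for $\mathcal{T}$ and once for $\partial \mathcal{T}$) and \autoref{prop:KLInfiniteIffLKInfinite}. In fact, the sentence preceding the statement already advertises this: we are in exactly the special case that these propositions were designed to combine.

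First, I would set $K = \partial \Pre \pi \subseteq Q^*$. Since $\Pre \pi$ is prefix-closed, $K$ is suffix-closed, so \autoref{prop:KSchreierAndKOrbit}, applied to $\mathcal{T}$ with this $K$ and the $\omega$-word $\alpha$, yields
\[
  |\partial \Pre \pi / \Pre \alpha| = \infty \iff |\partial \Pre \pi \circ \alpha| = \infty \text{,}
\]
which takes care of the left-hand side of the corollary.

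Next, I would apply \autoref{prop:KLInfiniteIffLKInfinite} to the same $K = \partial \Pre \pi$ (suffix-closed) together with $L = \Pre \alpha$ (prefix-closed). Since $\partial$ is an involution, $\partial K = \Pre \pi$, and the proposition gives
\[
  |\partial \Pre \pi / \Pre \alpha| = \infty \iff |\partial \Pre \alpha / \Pre \pi| = \infty \text{,}
\]
where the right-hand equivalence $\equiv_{\Pre \pi}$ is now the one associated with the dual automaton $\partial \mathcal{T}$.

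Finally, I would apply \autoref{prop:KSchreierAndKOrbit} a second time, but this time to the automaton $\partial \mathcal{T}$ (which is still an \SAut since the dual of a deterministic automaton is deterministic). In $\partial \mathcal{T}$, the state set is $\Sigma$ and the letter set is $Q$; the set $\partial \Pre \alpha \subseteq \Sigma^*$ is suffix-closed (again because $\Pre \alpha$ is prefix-closed) and $\pi \in Q^\omega$ is a valid input $\omega$-word. The proposition therefore delivers
\[
  |\partial \Pre \alpha / \Pre \pi| = \infty \iff |\partial \Pre \alpha \circ_\partial \pi| = \infty \text{,}
\]
which matches the right-hand side of the claim. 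Concatenating the three equivalences finishes the proof. I do not foresee any real obstacle here; the only thing to keep straight is the bookkeeping about which sets are suffix-closed versus prefix-closed and, in the last step, the fact that the equivalence relation $\equiv_{\Pre \pi}$ that appears after the middle step is exactly the one governed by $\partial \mathcal{T}$, so that the third invocation of \autoref{prop:KSchreierAndKOrbit} indeed applies to the correct automaton.
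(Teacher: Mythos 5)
Your proof is correct and is exactly the derivation the paper itself indicates for this cited result: the special case of \autoref{prop:KLInfiniteIffLKInfinite} where $K$ and $L$ come from single $\omega$-words, sandwiched between two applications of \autoref{prop:KSchreierAndKOrbit} (once for $\mathcal{T}$ and once for $\partial\mathcal{T}$). The closure properties and the bookkeeping about which automaton governs each quotient are all handled correctly.
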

    
    \autoref{prop:KLInfiniteIffLKInfinite} can also be used to prove the following connection between an element of an automaton semigroup having torsion and the infinity of the corresponding dual orbit (compare to \cite[Theorem~3]{DaRo16} and \cite[Proposition 7]{KPS}).
    \begin{thm}[{\cite[Theorem~3.12]{orbitsPart}}]\label{thm:torsionIsFiniteDualOrbit}
      Let $\mathcal{T} = (Q, \Sigma, \delta)$ be an \SAut and let $\bm{q} \in Q^+$. Then, the statements
      \begin{enumerate}
        \item $\partial \bm{q}$ has torsion in $\mathscr{S}(\mathcal{T})$.
        \item The orbit $\Sigma^* \circ_{\partial} \bm{q}^\omega$ of $\bm{q}^\omega$ under the action of the dual of $\mathcal{T}$ is finite.
        \item The orbit $\Sigma^* \circ_{\partial} \bm{p} \bm{q}^\omega$ of $\bm{p} \bm{q}^\omega$ under the action of the dual of $\mathcal{T}$ is finite for all $\bm{p} \in Q^*$.
      \end{enumerate}
      are equivalent.
    \end{thm}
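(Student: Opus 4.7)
The plan is to establish (1) $\Leftrightarrow$ (2) via a single application of \autoref{prop:KLInfiniteIffLKInfinite} with a carefully chosen pair of a suffix-closed set $K$ and a prefix-closed set $L$, and then to deduce (2) $\Leftrightarrow$ (3) by a short factorisation of the dual action. Concretely, for (1) $\Leftrightarrow$ (2) I would take $K = \partial \Pre \bm{q}^\omega \subseteq Q^*$ and $L = \Sigma^*$. The set $K$ is suffix-closed because the reverses of the prefixes of $\bm{q}^\omega$ are exactly the suffixes of its reversal (read as a left-infinite sequence), and $\Sigma^*$ is prefix-closed. Using $\partial \Sigma^* = \Sigma^*$, \autoref{prop:KLInfiniteIffLKInfinite} then yields
\[
  |K / \Sigma^*| < \infty \iff |\Sigma^* / \Pre \bm{q}^\omega| < \infty.
\]

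Every $\bm{r} \in K$ has the form $\partial w \cdot (\partial \bm{q})^n$ for some $n \geq 0$ and some prefix $w$ of $\bm{q}$ with $|w| < |\bm{q}|$; since there are only finitely many such $w$, the image of $K$ in the monoid $\mathscr{S}(\mathcal{T}) \cup \{1\}$ is finite if and only if $\{(\partial \bm{q})^n : n \geq 1\}$ is finite in $\mathscr{S}(\mathcal{T})$, i.\,e.\ if and only if $\partial \bm{q}$ has torsion. As this image has cardinality $|K / \Sigma^*|$, the left-hand side of the displayed equivalence is thus equivalent to (1). On the other side, two words $v_1, v_2 \in \Sigma^*$ are $\equiv_{\Pre \bm{q}^\omega}$-equivalent in $\partial \mathcal{T}$ if and only if $v_1 \circ_{\partial} \bm{q}^n = v_2 \circ_{\partial} \bm{q}^n$ for every $n$, which by prefix-compatibility of $\circ_{\partial}$ is equivalent to $v_1 \circ_{\partial} \bm{q}^\omega = v_2 \circ_{\partial} \bm{q}^\omega$; hence $|\Sigma^* / \Pre \bm{q}^\omega| = |\Sigma^* \circ_{\partial} \bm{q}^\omega|$, so the right-hand side is equivalent to (2).

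The implication (3) $\Rightarrow$ (2) follows by setting $\bm{p} = \varepsilon$. For (2) $\Rightarrow$ (3) I would use, for every $v \in \Sigma^*$, the factorisation
\[
  v \circ_{\partial} \bm{p} \bm{q}^\omega = (v \circ_{\partial} \bm{p}) \, (v' \circ_{\partial} \bm{q}^\omega), \qquad v' = v \cdot_{\partial} \bm{p},
\]
obtained by reading the letters of $\bm{p} \bm{q}^\omega$ one by one in the dual. The first factor lies in the finite set $Q^{|\bm{p}|}$ and the second in the (by~(2)) finite set $\Sigma^* \circ_{\partial} \bm{q}^\omega$, so $\Sigma^* \circ_{\partial} \bm{p} \bm{q}^\omega$ is finite. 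The main difficulty is conceptual rather than technical: it lies in spotting that $(K, L) = (\partial \Pre \bm{q}^\omega, \Sigma^*)$ is the right instance of \autoref{prop:KLInfiniteIffLKInfinite} to invoke, after which both sides of the proposition translate almost directly into (1) and (2).
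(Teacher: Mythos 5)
This theorem is not proved in the paper at all: it is quoted verbatim from \cite[Theorem~3.12]{orbitsPart}, and the only hint the paper gives about its proof is the remark that \autoref{prop:KLInfiniteIffLKInfinite} ``can also be used to prove the following connection.'' Your proposal fleshes out exactly that route, and it is essentially correct. The choice $K = \partial\Pre\bm{q}^\omega$, $L = \Sigma^*$ is legitimate ($K$ is indeed suffix-closed, $L$ prefix-closed), the identification of $|K/\Sigma^*|$ with the cardinality of the image of $K$ in the monoid is right, and the reduction to torsion of $\partial\bm{q}$ via the finitely many truncations $\partial w$ (including $w = \varepsilon$) is clean. The factorisation $v \circ_{\partial} \bm{p}\bm{q}^\omega = (v \circ_{\partial} \bm{p})\,\bigl((v \cdot_{\partial} \bm{p}) \circ_{\partial} \bm{q}^\omega\bigr)$ for (2) $\Rightarrow$ (3) is the standard prefix-compatibility argument and works as stated.

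One step is over-claimed: the asserted equality $|\Sigma^*/\Pre\bm{q}^\omega| = |\Sigma^* \circ_{\partial}\bm{q}^\omega|$ via ``equivalent iff same image on $\bm{q}^\omega$'' is only valid when the dual is complete. An \SAut may be partial, and then $\equiv_{\Pre\bm{q}^\omega}$ also separates words according to \emph{where} their action becomes undefined (the relation requires ``equal or both undefined'' on every finite prefix), so two words can lie in different classes while both contribute nothing to the orbit of $\bm{q}^\omega$; the two cardinalities need not coincide. This does not break the proof, because all you need is the equivalence of \emph{finiteness}, which is precisely \autoref{prop:KSchreierAndKOrbit} applied to the dual automaton with the suffix-closed set $\Sigma^*$ and the $\omega$-word $\bm{q}^\omega$ (and that proposition is explicitly stated to hold for non-complete \SAuta). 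Replace the claimed bijection by that citation and the argument is complete.
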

  \end{section}

  \begin{section}{The Finiteness Problem for Automaton Groups}
    The question whether the finiteness problem for automaton groups
    \problem{
      a \GAut $\mathcal{T}$
    }{
      is $\mathscr{G}(\mathcal{T})$ finite?
    }\noindent
    is undecidable is an important open question in the algorithmic study of automaton groups \cite[7.2 b)]{GriNeShu}. In this section, we will show that a more general version of the problem is undecidable. We will show this by using Gillibert's result that there is an automaton group whose order problem
    \problem[
      a \GAut $\mathcal{T} = (Q, \Sigma, \delta)$
    ]{
      a finite state sequence $\bm{q} \in Q^*$
    }{
      has $\bm{q}$ finite order in $\mathscr{G}(\mathcal{T})$?
    }\noindent
    is undecidable \cite{gillibert2018automaton}. In fact, this result was also obtain by Bartholdi and Mitrofanov \cite{bartholdi2017wordAndOrderProblems} but we specifically use the construction given by Gillibert.
    
    \begin{thm}\label{thm:generalizedGroupFinitenessProblem}
      The decision problem
      \problem[
        a \GAut $\mathcal{T} = (Q, \Sigma, \delta)$
      ]{
        a finite word $w \in \Sigma^*$
      }{
        is $\mathscr{G}(\mathcal{T}) \cdot w = \{ g \cdot w \mid g \in \mathscr{G}(\mathcal{T}) \}$ finite?
      }\noindent
      is undecidable for some \GAut $\mathcal{T}$.
    \end{thm}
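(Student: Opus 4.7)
We reduce from the order problem for Gillibert's $\mathscr{G}$-automaton $\mathcal{T}_G = (Q_G, \Sigma_G, \delta_G)$, which is undecidable. The goal is to exhibit a single $\mathscr{G}$-automaton $\mathcal{T}$ together with a computable map $\bm{q} \mapsto w_{\bm{q}} \in \Sigma^*$ such that $\mathscr{G}(\mathcal{T}) \cdot w_{\bm{q}}$ is finite iff $\bm{q}$ has finite order in $\mathscr{G}(\mathcal{T}_G)$.

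The central reformulation, already announced in the introduction, is that for any $\mathscr{G}$-automaton $\mathcal{T}$ the section set $\mathscr{G}(\mathcal{T}) \cdot w$ is finite iff the left principal ideal $Ss \cup \{s\}$ is finite in $S = \mathscr{S}(\partial\mathcal{T})$, with $s$ the class of $\partial w$. This equivalence follows from the cross-diagram identity $\partial(\bm{p} \cdot w) = \partial w \circ_\partial \partial \bm{p}$: two sections $\bm{p}_1 \cdot w$ and $\bm{p}_2 \cdot w$ coincide in $\mathscr{G}(\mathcal{T})$ exactly when the left multiples $\bm{y}_i\,\partial w$ (with $\bm{y}_i = \partial \bm{p}_i$) coincide in $S$. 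Since the dual of a $\mathscr{G}$-automaton is automatically complete and reversible, it suffices to produce a $\mathscr{G}$-automaton $\mathcal{T}$ whose corresponding left-ideal-finiteness problem is undecidable.

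On the source side, applying \autoref{thm:torsionIsFiniteDualOrbit} to $\mathcal{T}_G$ with $\partial \bm{q}$ substituted for $\bm{q}$ yields that $\bm{q}$ is torsion in $\mathscr{S}(\mathcal{T}_G)$ iff the orbit $\Sigma_G^* \circ_\partial (\partial\bm{q})^\omega$ in $\partial\mathcal{T}_G$ is finite. Because $\mathcal{T}_G$ is invertible, torsion in $\mathscr{S}(\mathcal{T}_G)$ coincides with finite order in $\mathscr{G}(\mathcal{T}_G)$ (multiplying by inverses), and Gillibert's undecidability therefore transfers to this dual-orbit question. The remaining task is to realize the left-ideal-finiteness problem for $\mathscr{S}(\partial\mathcal{T})$ as a faithful translation of the dual-orbit problem for $\partial\mathcal{T}_G$.

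The construction is to design $\mathcal{T}$ so that $\partial\mathcal{T}$ embeds $\partial\mathcal{T}_G$ and to choose $w_{\bm{q}}$ so that the action $\partial w_{\bm{q}} \circ_\partial (-)$ enumerates the truncations $(\partial \bm{q})^n$ on inputs of length $n$; then, via \autoref{prop:KSchreierAndKOrbit}, the classes of left multiples $\bm{y}\,\partial w_{\bm{q}}$ for $\bm{y} \in \Sigma^*$ in $\mathscr{S}(\partial\mathcal{T})$ correspond, up to a finite multiplicative factor from the enlarged alphabet, to the orbit elements $\bm{x} \circ_\partial (\partial\bm{q})^\omega$ for $\bm{x} \in \Sigma_G^*$, closing the reduction. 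The principal obstacle is the simultaneous invertibility of $\mathcal{T}$ (required for it to be a $\mathscr{G}$-automaton) and the embedding of $\partial\mathcal{T}_G$ into $\partial\mathcal{T}$: Gillibert's $\mathcal{T}_G$ need not be reversible, so $\partial\mathcal{T}_G$ need not be invertible, and restoring invertibility of $\mathcal{T}$ forces a careful enlargement of the alphabet with bookkeeping letters that absorb the offending output-to-next-state collisions. Exploiting the explicit Turing-machine-simulation structure of Gillibert's construction is expected to be essential to carry out this enlargement uniformly and without breaking the orbit-to-ideal correspondence.
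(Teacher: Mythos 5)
There is a genuine gap: your proposal correctly identifies the starting point (Gillibert's order problem) and the overall shape of the reduction (turn ``finite order of a state sequence'' into ``finiteness of the set of sections of a word''), but it stops exactly where the proof begins. The entire content of the argument is an explicit construction, and you never give one -- your last paragraph describes what the construction \emph{should} do (``design $\mathcal{T}$ so that $\partial\mathcal{T}$ embeds $\partial\mathcal{T}_G$ \dots'') and then names the obstruction (invertibility of $\mathcal{T}$ versus non-invertibility of $\partial\mathcal{T}_G$) without resolving it, explicitly deferring to properties of Gillibert's construction that you do not verify. The paper avoids this obstruction entirely by working on the primal side: it extends Gillibert's automaton $\mathcal{R}$ itself with fresh states $s$, $t$, $\id$ and $\#_p$ and fresh letters $*$, $\#$, $(a_p,0)$, $(a_p,1)$, so that $t$ implements a binary increment on the word $w' = (a_{p_1},0)\dots(a_{p_\ell},0)\#$ and the sections $t^k \cdot w'$ run exactly through the suffixes and powers of $\$\Lambda(\bm{p})$. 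Two points of that construction are essential and absent from your plan: first, the reduction is not from the plain order problem for $\bm{q}$ but from the order problem for $\$\Lambda(\bm{p})$ with $\Lambda$ Gillibert's doubling function -- precisely because a binary counter naturally emits the $\Lambda$-pattern as its sequence of sections, whereas a gadget emitting plain powers $\bm{q}^n$ is not available; second, one must check that having finite order in the extended group $\mathscr{G}(\mathcal{T})$ is equivalent to having finite order in $\mathscr{G}(\mathcal{R})$, which the paper does by routing any distinguishing word through $\id$ once it leaves $\Gamma^*$.

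A secondary but real error: your justification of the dual reformulation claims that two sections $\bm{p}_1 \cdot w$ and $\bm{p}_2 \cdot w$ coincide in $\mathscr{G}(\mathcal{T})$ \emph{exactly when} the left multiples $\partial\bm{p}_1\,\partial w$ and $\partial\bm{p}_2\,\partial w$ coincide in $\mathscr{S}(\partial\mathcal{T})$. These are equalities of functions on different domains ($\Sigma^*$ versus $Q^*$), and the element-wise correspondence fails in general; what holds is only the simultaneous finiteness or infiniteness of the two quotient sets, which is the content of \autoref{prop:KLInfiniteIffLKInfinite} together with \autoref{thm:infiniteSubsetHasInfiniteOrbit} and \autoref{cor:alphaAndPiOrbit}. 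The paper derives exactly this finiteness equivalence, but only \emph{after} the theorem, as a corollary reformulating it as the finiteness problem for left principal ideals in semigroups generated by complete reversible automata; it is not used as an ingredient of the proof of the theorem itself, and using it as you propose would still leave you needing the same explicit automaton construction on one side or the other.
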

    \begin{proof}
      Although it is not explicitly stated in his proof, Gillibert actually shows the undecidability of the decision problem\enlargethispage{\baselineskip}
      \problem[
        a \GAut $\mathcal{R} = (P, \Gamma, \tau)$ and\newline
        a state $\$ \in P$
      ]{
        a finite sequence $\bm{p} \in P^*$ of states
      }{
        has $\$\Lambda(\bm{p})$ finite order in $\mathscr{G}(\mathcal{R})$?
      }\noindent
      where $\Lambda: P^* \to P^*$ is given by $\Lambda(\varepsilon) = \varepsilon$ and $\Lambda(\hat{p} \bm{p}) = \Lambda(\bm{p}) \hat{p} \Lambda(\bm{p})$ \cite{gillibert2018automaton}.\footnote{In Gillibert's paper, the function is called $Q$, actually. However, this notation clashes with the convention of using $Q$ to denote the state sets of automata followed in this work. Therefore, we use $\Lambda$ instead. Additionally, Gillibert uses right actions to define automaton groups. Therefore, we mirror the ordering here.}

      We take the \GAut $\mathcal{R}$ and extend it into a new \GAut $\mathcal{T} = (Q, \Sigma, \delta)$. Then, we reduce the above version of the order problem of $\mathcal{R}$ to the generalized finiteness problem for $\mathcal{T}$ from the theorem statement.
      
      As the alphabet of $\mathcal{T}$, we use $\Sigma = \Gamma \uplus \{ a_p \mid p \in P \} \times \{ 0, 1 \} \uplus \{ *, \# \}$, i.\,e.\ we add two new special letters $*$ and $\#$ as well as two new letters $(a_p, 0)$ and $(a_p, 1)$ for every state $p \in P$. Similarly, we use $Q = P \uplus \{ s, t, \id \} \uplus \{ \#_p \mid p \in P \}$ for the state set, i.\,e.\ we add three new states $s$, $t$ and $\id$ as well as a new state $\#_p$ for every old state $p \in P$. Of course, we also add new transitions
      \begin{align*}
        \delta' &= \tau \cup \{ \trans{s}{*}{*}{t}, \trans{t}{\#}{\#}{\$} \} \cup \{ \trans{t}{(a_p, 1)}{(a_p, 0)}{t}, \trans{t}{(a_p, 0)}{(a_p, 1)}{\#_p} \mid p \in P \} \\
        &\cup \{ \trans{\#_p}{(a_q, i)}{(a_q, i)}{\#_p}, \trans{\#_p}{\#}{\#}{p} \mid p, q \in P, i \in \{ 0, 1 \} \}\\
        &\cup \{ \trans{\id}{a}{a}{\id} \mid a \in \Sigma \} \text{,}
      \end{align*}
      which are depicted schematically in \autoref{fig:transitionsForInverseSemigroupOrbitFiniteness}, and make the automaton complete by adding a transition to the identity state whenever some transition is missing:
      \begin{align*}
        \delta &= \delta' \cup \{ \trans{q}{a}{a}{\id} \mid q \in Q, a \in \Sigma, \nexists a' \in \Sigma, q' \in Q: \trans{q}{a}{a'}{q'} \in \delta' \}
      \end{align*}%
      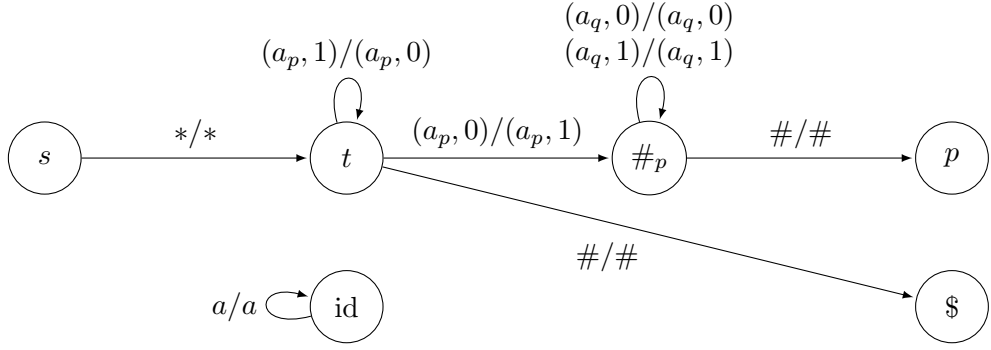
\begin{figure}\centering
        \begin{tikzpicture}[auto, shorten >=1pt, >=latex, node distance=1cm and 3cm]
          \node[state] (s) {$s$};
          \node[state, right=of s] (t) {$t$};
          \node[state, right=of t] (sharp) {$\#_p$};
          \node[state, right=of sharp] (p) {$p$};
          \node[state, below=of p] (dollar) {$\$$};
          \node[state, below=of t] (id) {$\id$};

          \draw[->] (s) edge node {$*/*$} (t)
                    (t) edge[loop above] node {$(a_p, 1) / (a_p, 0)$} (t)
                        edge node {$(a_p, 0) / (a_p, 1)$} (sharp)
                        edge node[swap] {$\# / \#$} (dollar)
                    (sharp) edge[loop above] node[align=center] {$(a_q, 0) / (a_q, 0)$\\$(a_q, 1) / (a_q, 1)$} (sharp)
                            edge node {$\# / \#$} (p)
                    (id) edge[loop left] node {$a / a$} (id)
          ;
        \end{tikzpicture}
        \caption{New transitions of $\mathcal{T}$}\label{fig:transitionsForInverseSemigroupOrbitFiniteness}
      \end{figure}%
      Note that the resulting automaton is indeed a \GAut!

      For the reduction of the strengthened version of the order problem to the generalized version of the finiteness problem, we map the input sequence $\bm{p} = p_\ell \dots p_1$ to the finite word $w = * w' = * (a_{p_1}, 0) \dots (a_{p_\ell}, 0) \#$, which is obviously computable. In the remainder of this proof, we show that $\$ \Lambda(\bm{p})$ has finite order in $\mathscr{G}(\mathcal{R})$ if and only if $\mathscr{G}(\mathcal{T}) \cdot w$ is finite.
      
      First, we show that $\$ \Lambda(\bm{p})$ has finite order in $\mathscr{G}(\mathcal{R})$ if and only if it has in $\mathscr{G}(\mathcal{T})$. We do this, by showing that $(\$ \Lambda(\bm{p}))^i$ and $(\$ \Lambda(\bm{p}))^j$ are distinct in $\mathscr{G}(\mathcal{R})$ if and only if they are distinct in $\mathscr{G}(\mathcal{T})$ for $i, j \in \mathbb{N}$. If they are distinct in $\mathscr{G}(\mathcal{R})$, there is some witness $u \in \Gamma^*$ which they act differently on. Since we have $\tau \subseteq \delta' \subseteq \delta$, this is also a witness for their difference in $\mathscr{G}(\mathcal{T})$. For the other direction, suppose that $(\$ \Lambda(\bm{p}))^i$ is different to $(\$ \Lambda(\bm{p}))^j$ in $\mathscr{G}(\mathcal{T})$. Then, there must be some witness $u \in \Sigma^*$ which they act differently on. We are done if $u$ is already in $\Gamma^*$. Otherwise, we can factorize $u = u_1 a u_2$ with $u_1 \in \Gamma^*$, $a \in \Sigma \setminus \Gamma$ and $u_2 \in \Sigma^*$. By the construction of $\mathcal{T}$, we remain in states from $P$ if we start in $P$ and read letters from $\Gamma$. If we read a letter from $\Sigma \setminus \Gamma$, we go to $\id$, which yields the cross diagrams
      \begin{center}
        \begin{tikzpicture}[baseline=(m-2-1.base)]
          \matrix[matrix of math nodes, text height=1.25ex, text depth=0.25ex] (m) {
                                   & u_1 &    & a &     & u_2 &     \\
            (\$ \Lambda(\bm{p}))^i &     & {} &   & \id^{i \, |\$ \Lambda(\bm{p})|} &   & {} \\
                                   & v_1 &    & a &     & u_2 &     \\
          };
          \foreach \j in {1} {
            \foreach \i in {1, 3, 5} {
              \draw[->] let
                \n1 = {int(2+\i)},
                \n2 = {int(1+\j)}
              in
                (m-\n2-\i) -> (m-\n2-\n1);
              \draw[->] let
                \n1 = {int(1+\i)},
                \n2 = {int(2+\j)}
              in
                (m-\j-\n1) -> (m-\n2-\n1);
            };
          };
        \end{tikzpicture}
        and
        \begin{tikzpicture}[baseline=(m-2-1.base)]
          \matrix[matrix of math nodes, text height=1.25ex, text depth=0.25ex] (m) {
                                   & u_1  &    & a &     & u_2 &     \\
            (\$ \Lambda(\bm{p}))^j &      & {} &   & \id^{j \, |\$ \Lambda(\bm{p})|} &   & {} \\
                                   & v_1' &    & a &     & u_2 &     \\
          };
          \foreach \j in {1} {
            \foreach \i in {1, 3, 5} {
              \draw[->] let
                \n1 = {int(2+\i)},
                \n2 = {int(1+\j)}
              in
                (m-\n2-\i) -> (m-\n2-\n1);
              \draw[->] let
                \n1 = {int(1+\i)},
                \n2 = {int(2+\j)}
              in
                (m-\j-\n1) -> (m-\n2-\n1);
            };
          };
        \end{tikzpicture}
        for $\mathcal{T}$.
      \end{center}
      Thus, $(\$ \Lambda(\bm{p}))^i$ and $(\$ \Lambda(\bm{p}))^j$ must already act differently on $u_1$, which is from $\Gamma^*$ and, thus, also a witness for $\mathcal{R}$.
      
      Next, we observe that $*$ is not changed by the action of any state and that we have $q \cdot * = \id$ for all $q \in Q$ except $s$ and $s \cdot * = t$. Thus, $\mathscr{G}(\mathcal{T}) \cdot *$ is the subgroup $T$ generated by $t$ in $\mathscr{G}(\mathcal{T})$ and we obtain $\mathscr{G}(\mathcal{T}) \cdot w = T \cdot w'$. To understand the elements in $T \cdot w'$, we will show that we have the cross diagram
      \begin{center}
        \hspace*{\fill}\begin{tikzpicture}[baseline=(m-2-1.base)]
          \matrix[matrix of math nodes, text height=1.25ex, text depth=0.25ex] (m) {
                               & w' & \\
            t^{k \, |\$ \Lambda(\bm{p})|} &   & (\$ \Lambda(\bm{p}))^k \\
                               & w' & \\
          };
          \foreach \j in {1} {
            \foreach \i in {1} {
              \draw[->] let
                \n1 = {int(2+\i)},
                \n2 = {int(1+\j)}
              in
                (m-\n2-\i) -> (m-\n2-\n1);
              \draw[->] let
                \n1 = {int(1+\i)},
                \n2 = {int(2+\j)}
              in
                (m-\j-\n1) -> (m-\n2-\n1);
            };
          };
        \end{tikzpicture}
        in $\mathcal{T}$\hfill\makebox[0pt]{\hypertarget{finiteness:claim}{($\dagger$)}}
      \end{center}
      for all $k \in \mathbb{N}$. This shows that $\mathscr{G}(\mathcal{T}) \cdot w = T \cdot w'$ is given by the state sequences
      \[
        \left( \Suf \$ \Lambda(\bm{p}) \right) \left( \$ \Lambda(\bm{p}) \right)^*
      \]
      and their inverses in $\mathscr{G}(\mathcal{T})$. These form a finite set in $\mathscr{G}(\mathcal{T})$ if and only if $\left( \$ \Lambda(\bm{p}) \right)^*$ is finite in $\mathscr{G}(\mathcal{T})$ and this is the case if and only if $\$ \Lambda(\bm{p})$ has finite order in $\mathscr{G}(\mathcal{T})$ (or, equivalently, in $\mathscr{G}(\mathcal{R})$).

      The easiest way to establish the cross diagrams \hyperlink{finiteness:claim}{($\dagger$)} is by calculation. For example, for $\bm{p} = p_3 p_2 p_1$, we have $w' = (a_{p_1}, 0)(a_{p_2}, 0)(a_{p_3}, 0)\#$ and the cross diagram:
      \begin{center}
        \begin{tikzpicture}
          \matrix[matrix of math nodes, text height=1.25ex, text depth=0.25ex] (m) {
              & (a_{p_1}, 0) &          & (a_{p_2}, 0) &          & (a_{p_3}, 0) &          & \# & \\
            t &              & \#_{p_1} &              & \#_{p_1} &              & \#_{p_1} &    & p_1 \\
              & (a_{p_1}, 1) &          & (a_{p_2}, 0) &          & (a_{p_3}, 0) &          & \# & \\
            t &              & t        &              & \#_{p_2} &              & \#_{p_2} &    & p_2 \\
              & (a_{p_1}, 0) &          & (a_{p_2}, 1) &          & (a_{p_3}, 0) &          & \# & \\
            t &              & \#_{p_1} &              & \#_{p_1} &              & \#_{p_1} &    & p_1 \\
              & (a_{p_1}, 1) &          & (a_{p_2}, 1) &          & (a_{p_3}, 0) &          & \# & \\
            t &              & t        &              & t        &              & \#_{p_3} &    & p_3 \\
              & (a_{p_1}, 0) &          & (a_{p_2}, 0) &          & (a_{p_3}, 1) &          & \# & \\
            t &              & \#_{p_1} &              & \#_{p_1} &              & \#_{p_1} &    & p_1 \\
              & (a_{p_1}, 1) &          & (a_{p_2}, 0) &          & (a_{p_3}, 1) &          & \# & \\
            t &              & t        &              & \#_{p_2} &              & \#_{p_2} &    & p_2 \\
              & (a_{p_1}, 0) &          & (a_{p_2}, 1) &          & (a_{p_3}, 1) &          & \# & \\
            t &              & \#_{p_1} &              & \#_{p_1} &              & \#_{p_1} &    & p_1 \\
              & (a_{p_1}, 1) &          & (a_{p_2}, 1) &          & (a_{p_3}, 1) &          & \# & \\
            t &              & t        &              & t        &              & t        &    & \$ \\
              & (a_{p_1}, 0) &          & (a_{p_2}, 0) &          & (a_{p_3}, 0) &          & \# & \\
          };
          \foreach \j in {1, 3, 5, 7, 9, 11, 13, 15} {
            \foreach \i in {1, 3, 5, 7} {
              \draw[->] let
                \n1 = {int(2+\i)},
                \n2 = {int(1+\j)}
              in
                (m-\n2-\i) -> (m-\n2-\n1);
              \draw[->] let
                \n1 = {int(1+\i)},
                \n2 = {int(2+\j)}
              in
                (m-\j-\n1) -> (m-\n2-\n1);
            };
          };
          \draw[decorate, decoration=brace] (m-2-9.north east) -- node[right] (label) {$\Lambda(q_2 q_1)$} (m-6-9.south east);
          \draw[decorate, decoration=brace] (m-10-9.north east) -- node[right] {$\Lambda(q_2 q_1)$} (m-14-9.south east);
          \draw[decorate, decoration=brace] (m-2-9.north east -| label.east) -- node[right] {$\Lambda(q_3 q_2 q_1)$} (m-14-9.south east -| label.east);
        \end{tikzpicture}
      \end{center}
      Notice that, in the second component, $t$ implements a binary increment (in the same way as the adding machine in \autoref{ex:addingMachine}). This is what creates the pattern of $\Lambda(\bm{p})$.

      For a formal proof, we first define the shorthand notations $(a_{\bm{p}}, i) = (a_{p_1}, i) \dots (a_{p_\ell}, i)$ for $i \in \{ 0, 1 \}$ and $\bm{p} = p_\ell \dots p_1$ as well as $\#_\varepsilon = \varepsilon$ and $\#_{\Lambda(\hat{p} \bm{p})} = \#_{\Lambda(\bm{p})} \#_{\hat{p}} \#_{\Lambda(\bm{p})}$ for $\bm{p} \in P^*$ and $\hat{p} \in P$. We start by showing the cross diagram(s)
      \begin{center}
        \begin{tikzpicture}
          \matrix[matrix of math nodes, text height=1.25ex, text depth=0.25ex] (m) {
                            & (a_{\bm{p}}, 0) & \\
            t^{|\Lambda(\bm{p})|} &                 & \#_{\Lambda(\bm{p})} \\
                            & (a_{\bm{p}}, 1) & \\
                          t &                 & t \\
                            & (a_{\bm{p}}, 0) & \\
          };
          \foreach \j in {1, 3} {
            \foreach \i in {1} {
              \draw[->] let
                \n1 = {int(2+\i)},
                \n2 = {int(1+\j)}
              in
                (m-\n2-\i) -> (m-\n2-\n1);
              \draw[->] let
                \n1 = {int(1+\i)},
                \n2 = {int(2+\j)}
              in
                (m-\j-\n1) -> (m-\n2-\n1);
            };
          };
        \end{tikzpicture}
      \end{center}
      for every $\bm{p} \in P^+$ by induction on the length of $\bm{p}$. For $\bm{p} = p \in P$, this is easily verified from the definition of $\mathcal{T}$ (note that $\Lambda(\bm{p}) = \Lambda(p) = p$ in this case). For $\bm{p}' = \hat{p} \bm{p}$ with $\hat{p} \in P$, we have $|\Lambda(\hat{p} \bm{p})| = 2|\Lambda(\bm{p})| + 1$ and the cross diagram
      \begin{center}
        \begin{tikzpicture}
          \matrix[matrix of math nodes, text height=1.25ex, text depth=0.25ex] (m) {
                            & (a_{\bm{p}}, 0) &                & (a_{\hat{p}}, 0) & \\
            t^{|\Lambda(\bm{p})|} &                 & \#_{\Lambda(\bm{p})} &          & \#_{\Lambda(\bm{p})} \\
                            & (a_{\bm{p}}, 1) &                & (a_{\hat{p}}, 0) & \\
                          t &                 & t              &          & \#_{\hat{p}}\\
                            & (a_{\bm{p}}, 0) &                & (a_{\hat{p}}, 1) & \\
            t^{|\Lambda(\bm{p})|} &                 & \#_{\Lambda(\bm{p})} &          & \#_{\Lambda(\bm{p})} \\
                            & (a_{\bm{p}}, 1) &                & (a_{\hat{p}}, 1) & \\
                          t &                 & t              &          & t \\
                            & (a_{\bm{p}}, 0) &                & (a_{\hat{p}}, 0) & \\
          };
          \foreach \j in {1, 3, 5, 7} {
            \foreach \i in {1, 3} {
              \draw[->] let
                \n1 = {int(2+\i)},
                \n2 = {int(1+\j)}
              in
                (m-\n2-\i) -> (m-\n2-\n1);
              \draw[->] let
                \n1 = {int(1+\i)},
                \n2 = {int(2+\j)}
              in
                (m-\j-\n1) -> (m-\n2-\n1);
            };
          };

          \path[fill=gray, opacity=0.2] (m-1-2.north -| m-2-1.west) rectangle (m-9-2.south -| m-2-3.east);
          \draw[dotted] (m-5-2.west -| m-2-1.west) -- (m-5-2.west)
                        (m-5-2.east) -- (m-5-2.west -| m-2-3.east);
          \draw[decorate, decoration=brace] (m-2-5.north east) -- node[right] (label) {$\#_{\Lambda(\hat{p} \bm{p})}$} (m-6-5.south east);
        \end{tikzpicture}
      \end{center}
      where the shaded part is obtained from using the induction hypothesis twice. For the part on the right, notice that we have $\#_p \circ (a_q, i) = (a_q, i)$ and $\#_p \cdot (a_q, i) = \#_p$ for all $p, q \in P$ and $i \in \{ 0, 1 \}$ by construction. The two transactions on the right involving $t$ can directly be verified, which concludes the induction.

      Finally, we can extend this to prove the cross diagrams \hyperlink{finiteness:claim}{($\dagger$)} required above:
      \begin{center}
        \begin{tikzpicture}
          \matrix[matrix of math nodes, text height=1.25ex, text depth=0.25ex] (m) {
              & (a_{\bm{p}}, 0) & & \# & \\
            t^{|\Lambda(\bm{p})|} & & \#_{\Lambda(\bm{p})} & & {\Lambda(\bm{p})} \\
              & (a_{\bm{p}}, 1) & & \# & \\
            t & & t & & \$ \\
              & (a_{\bm{p}}, 0) & & \# & \\
          };
          \foreach \j in {1, 3} {
            \foreach \i in {1, 3} {
              \draw[->] let
                \n1 = {int(2+\i)},
                \n2 = {int(1+\j)}
              in
                (m-\n2-\i) -> (m-\n2-\n1);
              \draw[->] let
                \n1 = {int(1+\i)},
                \n2 = {int(2+\j)}
              in
                (m-\j-\n1) -> (m-\n2-\n1);
            };
          };
        \end{tikzpicture}
      \end{center}
      The only point to notice here is that we indeed have $\#_{\Lambda(\bm{p})} \cdot \# = \Lambda(\bm{p})$; however, this is straight-forward to verify.
    \end{proof}
    
    \paragraph{Orbital and Dual Formulation.}
    The finiteness problem for automaton groups and the generalized problem from \autoref{thm:generalizedGroupFinitenessProblem} can also be formulated in other ways.
    
    The first one is a re-formulation based on orbits. Using \autoref{cor:infiniteSemigroupsHaveInfiniteOrbits}, we immediately obtain that the finiteness problem for automaton groups is equivalent to (the complement of) the problem:
    \problem{
      a \GAut $\mathcal{T} = (Q, \Sigma, \delta)$
    }{
      $\exists \alpha \in \Sigma^\omega: |Q^* \circ \alpha| = \infty$?
    }\noindent
    For the problem in \autoref{thm:generalizedGroupFinitenessProblem}, this view yields the following formulation.
    \begin{cor}\label{cor:groupWithUndecidablePotentialProblem}
      The decision problem
      \problem[
        a \GAut $\mathcal{T} = (Q, \Sigma, \delta)$
      ]{
        a finite word $w \in \Sigma^*$
      }{
        $\exists \alpha \in \Sigma^\omega: |Q^* \circ w\alpha| = \infty$?
      }\noindent
      is undecidable for some \GAut $\mathcal{T}$.
    \end{cor}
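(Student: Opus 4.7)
My plan is to establish that, for any \GAut $\mathcal{T} = (Q, \Sigma, \delta)$ and any $w \in \Sigma^*$, the set $\mathscr{G}(\mathcal{T}) \cdot w$ is infinite if and only if there is some $\alpha \in \Sigma^\omega$ with $|Q^* \circ w\alpha| = \infty$. Together with \autoref{thm:generalizedGroupFinitenessProblem} applied to the same \GAut $\mathcal{T}$, this equivalence immediately yields the corollary, because the orbit problem stated here then coincides, up to complementation, with the generalized finiteness problem of the theorem on every input.

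To prove the equivalence I would proceed in three short stages. First, using the cited fact that for \GAuta the $Q^*$-orbit of a word is infinite iff its $Q^{\pm *}$-orbit is infinite, it suffices to work with $|Q^{\pm *} \circ w\alpha|$. Second, I set $K = Q^{\pm *} \cdot w \subseteq Q^{\pm *}$ and observe that the image of $K$ in $\mathscr{S}(\mathcal{T}') = \mathscr{G}(\mathcal{T})$, where $\mathcal{T}' = (Q \sqcup Q^{-1}, \Sigma, \delta \cup \delta^{-1})$, is precisely $\mathscr{G}(\mathcal{T}) \cdot w$. Because of the ordering convention for state sequences (the rightmost state of $\bm{q}$ processes the input first), a rightmost suffix of $\bm{q} \cdot w$ arises from running $w$ through the corresponding rightmost suffix of $\bm{q}$, so $K$ is suffix-closed. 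Applying \autoref{thm:infiniteSubsetHasInfiniteOrbit} to $\mathcal{T}'$ then gives $|\mathscr{G}(\mathcal{T}) \cdot w| = \infty$ iff there is $\alpha \in \Sigma^\omega$ with $|K \circ \alpha| = \infty$.

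Third, I relate $K \circ \alpha$ to $Q^{\pm *} \circ w\alpha$ via the identity
\[
  \bm{q} \circ w\alpha = (\bm{q} \circ w) \bigl( (\bm{q} \cdot w) \circ \alpha \bigr).
\]
Since $\bm{q} \circ w$ ranges only over the finite set $\Sigma^{|w|}$, a direct pigeonhole argument yields $|K \circ \alpha| = \infty \iff |Q^{\pm *} \circ w\alpha| = \infty$, and chaining the three equivalences completes the proof. The main obstacle is bookkeeping: verifying suffix-closedness of $K$ requires some care with the rightmost-acts-first convention, and one must correctly chain the equivalences across the various orbit formulations ($Q^*$- versus $Q^{\pm *}$-orbits of $w\alpha$ and $K$-orbits of $\alpha$). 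Once this is set up, each step is a direct invocation of a preliminary result.
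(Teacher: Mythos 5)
Your proposal is correct and follows essentially the same route as the paper's own proof: reduce to the equivalence with \autoref{thm:generalizedGroupFinitenessProblem}, apply \autoref{thm:infiniteSubsetHasInfiniteOrbit} to the suffix-closed set $Q^{\pm *} \cdot w$, and then pass between $|Q^{\pm *} \cdot w \circ \alpha|$ and $|Q^{\pm *} \circ w\alpha|$ using the finiteness of the set of possible length-$|w|$ prefixes (the paper phrases your pigeonhole step as the inclusion $Q^{\pm *} \circ w\alpha \subseteq (Q^{\pm *} \circ w)(Q^{\pm *} \cdot w \circ \alpha)$ together with the prefix-removal surjection). No substantive differences.
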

    \begin{proof}
      We have to show that $\mathscr{G}(\mathcal{T}) \cdot w$ is infinite if and only if the is some $\omega$-word $\alpha \in \Sigma^\omega$ such that the orbit $Q^* \circ w\alpha$ is infinite.
      
      In $\mathscr{G}(\mathcal{T})$, the elements of $\mathscr{G}(\mathcal{T}) \cdot w$ are given by $Q^{\pm *} \cdot w$, which is suffix-closed and, thus, by \autoref{thm:infiniteSubsetHasInfiniteOrbit}, infinite in $\mathscr{G}(\mathcal{T})$ if and only if there is some $\alpha \in \Sigma^\omega$ with $|Q^{\pm *} \cdot w \circ \alpha| = \infty$. We claim that $Q^{\pm *} \cdot w \circ \alpha$ is infinite if and only if $Q^{\pm *} \circ w \alpha$ is. Since the latter is the case if and only if $Q^* \circ w \alpha$ is infinite, we are done when we have shown this claim.
      
      Clearly, we can map $Q^{\pm *} \circ w \alpha$ surjectively onto $Q^{\pm *} \cdot w \circ \alpha$ by removing the prefix of length $|w|$. Thus, if $Q^{\pm *} \cdot w \circ \alpha$ is infinite, so must be $Q^{\pm *} \circ w \alpha$. On the other hand, we have $Q^{\pm *}* \circ w \alpha \subseteq \left( Q^{\pm *} \circ w \right) \left( Q^{\pm *} \cdot w \circ \alpha \right)$ and the first of the two sets on the right is always finite. Thus, if $Q^{\pm *} \circ w \alpha$ is infinite, $Q^{\pm *} \cdot w \circ \alpha$ must also be infinite.
    \end{proof}
    
    Another re-formulation is based on the dual automaton. A \GAut generates an infinite group if and only if its dual generates an infinite semigroup (see, e.\,g., \cite[Proof of Lemma~5]{aklmp12} combined with \cite[Proposition~10]{aklmp12}). Thus, the finiteness problem for automaton groups is equivalent to the problem
    \problem{
      a complete and reversible \SAut $\mathcal{T} = (Q, \Sigma, \delta)$
    }{
      is $\mathscr{S}(\mathcal{T})$ finite?
    }\noindent
    If we re-formulate the problem from \autoref{thm:generalizedGroupFinitenessProblem} under this view, we basically obtain the finiteness problem for left principal ideals for semigroups generated by complete and reversible \SAuta.
    \begin{cor}
      The decision problem
      \problem[
        a complete and reversible \SAut $\mathcal{R} = (P, \Gamma, \tau)$
      ]{
        a finite state sequence $\bm{p} \in P^*$
      }{
        is $P^* \bm{p}$ finite in $\mathscr{S}(\mathcal{R})$?
      }\noindent
      is undecidable.
    \end{cor}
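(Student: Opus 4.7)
The plan is to reduce from the orbital formulation of the generalized group finiteness problem established in \autoref{cor:groupWithUndecidablePotentialProblem}. Given a \GAut $\mathcal{T} = (Q, \Sigma, \delta)$ and a finite word $w \in \Sigma^*$, I output the pair $(\mathcal{R}, \bm{p}) = (\partial \mathcal{T}, \partial w)$, noting that $P = \Sigma$ and that the dual of a \GAut is a complete and reversible \SAut. The claim to establish is $\exists \alpha \in \Sigma^\omega \colon |Q^* \circ w\alpha| = \infty$ if and only if $|P^* \bm{p}|$ is infinite in $\mathscr{S}(\mathcal{R})$.

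For the forward direction I would use duality once. From $|Q^* \circ w\alpha| = \infty$, \autoref{prop:KSchreierAndKOrbit} gives $|Q^* / \Pre(w\alpha)|_\mathcal{T} = \infty$, and \autoref{prop:KLInfiniteIffLKInfinite} turns this into $|\partial \Pre(w\alpha)|_{\mathscr{S}(\mathcal{R})} = \infty$ (noting that equivalence in $\mathscr{S}(\mathcal{R})$ is exactly $\equiv_{Q^*, \mathcal{R}}$). The decomposition $\partial \Pre(w\alpha) = \partial \Pre w \cup \partial \Pre \alpha \cdot \partial w$ together with the finiteness of $\partial \Pre w$ forces $|\partial \Pre \alpha \cdot \partial w|_{\mathscr{S}(\mathcal{R})} = \infty$, and this is a subset of $P^* \bm{p}$.

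For the backward direction I would apply the same duality machinery twice, using \autoref{thm:infiniteSubsetHasInfiniteOrbit} as a bridge. First I pass to the suffix-closure $K = P^* \bm{p} \cup \Suf \bm{p}$, whose image in $\mathscr{S}(\mathcal{R})$ still has infinite cardinality, and apply \autoref{thm:infiniteSubsetHasInfiniteOrbit} to obtain $\beta \in Q^\omega$ with $|K \circ_\mathcal{R} \beta| = \infty$. Since $P^* \bm{p} \circ_\mathcal{R} \beta = P^* \circ_\mathcal{R} \gamma$ for $\gamma := \bm{p} \circ_\mathcal{R} \beta \in Q^\omega$, this specializes to $|P^* \circ_\mathcal{R} \gamma| = \infty$, and a second pass through \autoref{prop:KSchreierAndKOrbit} and \autoref{prop:KLInfiniteIffLKInfinite} (now between $\mathcal{R}$ and its dual $\mathcal{T}$) yields $|\partial \Pre \gamma|_{\mathscr{S}(\mathcal{T})} = \infty$. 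A second application of \autoref{thm:infiniteSubsetHasInfiniteOrbit} on the suffix-closed set $\partial \Pre \gamma \subseteq Q^*$ then produces some $\alpha \in \Sigma^\omega$ with $|\partial \Pre \gamma \circ_\mathcal{T} \alpha| = \infty$. The key bridging observation is that the equivalence of cross diagrams turns the identity $\bm{p} \circ_\mathcal{R} \beta_n = \gamma_n$ (for every prefix length $n$) into $\partial \beta_n \cdot_\mathcal{T} w = \partial \gamma_n$, so $\partial \Pre \gamma \subseteq Q^* \cdot w$. Combining this with the natural surjection $Q^* \circ w\alpha \twoheadrightarrow (Q^* \cdot w) \circ \alpha$ that extracts the tail past position $|w|$ finally produces $|Q^* \circ w\alpha| = \infty$.

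The main technical obstacle I anticipate is the backward direction: identifying $\partial \Pre \gamma$ as a subset of $Q^* \cdot w$ via cross-diagram duality is the nonobvious step that ties the two applications of \autoref{thm:infiniteSubsetHasInfiniteOrbit} together, and one has to keep track carefully of the four actions $\circ_\mathcal{T}$, $\cdot_\mathcal{T}$, $\circ_\mathcal{R}$, $\cdot_\mathcal{R}$ and verify the prefix- or suffix-closedness hypotheses at each invocation of the duality results; once this bookkeeping is in hand, the rest is a routine unpacking of the framework recalled in the preliminaries.
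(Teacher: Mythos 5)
Your reduction is exactly the paper's ($\mathcal{R} = \partial \mathcal{T}$, $\bm{p} = \partial w$) and your proof of the correctness of the reduction is sound, but you route it through different lemmas than the paper does. The paper works in both directions with a single infinite state sequence $\pi \in Q^\omega$: it extracts an infinite simple path from an infinite orbital graph (a K\H{o}nig-type step) and then applies \autoref{cor:alphaAndPiOrbit} to exchange $|\partial \Pre \pi \circ w\alpha| = \infty$ with $|\partial \Pre(w\alpha) \circ_\partial \pi| = \infty$, using \autoref{thm:infiniteSubsetHasInfiniteOrbit} once (in the backward direction) to produce $\pi$. You never invoke \autoref{cor:alphaAndPiOrbit}. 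Your forward direction goes straight through \autoref{prop:KSchreierAndKOrbit} and \autoref{prop:KLInfiniteIffLKInfinite} together with the decomposition $\partial \Pre(w\alpha) \subseteq \Suf \partial w \cup \Sigma^* \partial w$; this is arguably cleaner, since it avoids the path-extraction argument entirely. Your backward direction is heavier than the paper's: two applications of \autoref{thm:infiniteSubsetHasInfiniteOrbit} glued together by the cross-diagram identity $\partial(\partial w \circ_{\mathcal{R}} \beta_n) = \partial\beta_n \cdot_{\mathcal{T}} w$, which gives $\partial \Pre \gamma \subseteq Q^* \cdot w$, followed by the tail surjection $Q^* \circ w\alpha \twoheadrightarrow (Q^* \cdot w) \circ \alpha$. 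I checked the hypotheses you flag as the delicate part and they all hold: $P^*\bm{p} \cup \Suf \bm{p}$ and $\partial \Pre \gamma$ are suffix-closed, $\Pre \gamma$ is prefix-closed, completeness of $\mathcal{R}$ makes $\gamma = \bm{p} \circ_{\mathcal{R}} \beta$ well defined, and the bridging identity is the correct instance of the duality of cross diagrams. In short, the paper's version is shorter because \autoref{cor:alphaAndPiOrbit} packages the double dualization you carry out by hand, while your version buys a forward direction with no graph-theoretic extraction and makes the role of the shifted word $\bm{p} \circ_{\mathcal{R}} \beta$ explicit.
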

    \begin{proof}
      We reduce the problem from \autoref{cor:groupWithUndecidablePotentialProblem} to this problem. As the automaton $\mathcal{R}$, we choose the dual $\partial \mathcal{T}$ of the \GAut $\mathcal{T} = (Q, \Sigma, \delta)$ and, for the reduction, we map $w \in \Sigma^*$ to $\partial w$ as the input sequence $\bm{p}$. We have to show that there is some $\alpha \in \Sigma^\omega$ with $|Q^* \circ w \alpha| = \infty$ if and only if $\Sigma^* \partial w$ is infinite in $\mathscr{S}(\partial \mathcal{T})$.

      If the orbital graph $Q^* \circ w \alpha$ is infinite (for some $\alpha \in \Sigma^\omega$), it must contain an infinite simple path starting in $w \alpha$. In other words, there is some $\pi \in Q^\omega$ with $\pi = p_1 p_2 \dots$ (where $p_1, p_2, \dots \in Q$) such that all $p_i \dots p_1 \circ w \alpha$ are pairwise distinct and that, thus, $\partial \Pre \pi \circ w \alpha$ is infinite.
      From \autoref{cor:alphaAndPiOrbit}, we obtain that $\partial \Pre (w \alpha) \circ_{\partial} \pi = \Suf \left( (\partial w) (\partial \alpha) \right) \circ_{\partial} \pi \subseteq \Sigma^* \partial w \circ_{\partial} \pi$ is also infinite, which shows that $\Sigma^* \partial w$ is infinite in $\mathscr{S}(\partial \mathcal{T})$.
      
      On the other hand, if $\Sigma^* \partial w$ is infinite in $\mathscr{S}(\partial \mathcal{T})$, we have in particular that $L = \Sigma^* \partial w \cup \Suf \partial w$ is infinite in $\mathscr{S}(\partial \mathcal{T})$. Since $L$ is suffix-closed, we obtain by \autoref{thm:infiniteSubsetHasInfiniteOrbit} that there is some $\pi \in Q^{\omega}$ such that $L \circ_{\partial} \pi$ is infinite. This is only possible if $\Sigma^* \partial w \circ_{\partial} \pi$ is infinite. Therefore,  we have an infinite path in the orbital graph $\Sigma^* \partial w \circ_{\partial} \pi$ which starts in $\pi$, goes to $\partial w \circ_{\partial} \pi$ and then continues as an infinite simple path. In other words, there has to be some $\alpha \in \Sigma^\omega$ with $\alpha = a_1 a_2 \dots$ (where $a_1, a_2, \dots \in \Sigma$) such that all $a_i \dots a_i (\partial w) \circ_{\partial} \pi$ are pairwise distinct. In particular, $\partial \Pre (w \alpha) \circ_{\partial} \pi$ is infinite, which implies that $\partial \Pre \pi \circ w \alpha \subseteq Q^* \circ w \alpha$ is also infinite by \autoref{cor:alphaAndPiOrbit}.
    \end{proof}
  \end{section}
  
  \begin{section}{Finite Orbits}\label{sec:FiniteOrbits}
    So far, we have looked at infinite orbits. In this section, we look at the opposite end and study $\omega$-words with finite orbits.
    
    While the existence of an infinite orbit is coupled to the algebraic property of an automaton semigroup $\mathscr{S}(\mathcal{T})$ to be infinite (by \autoref{cor:infiniteSemigroupsHaveInfiniteOrbits}), the existence of an $\omega$-word with a finite orbit depends on the generating automaton $\mathcal{T}$ (i.\,e.\ it is a property of the way the semigroup is presented, not an algebraic property). Indeed, if an \SAut $\mathcal{T}$ does not admit an $\omega$-word whose orbit is finite, we can add a new letter $a$ to the alphabet of $\mathcal{T}$ and loops $\trans{q}{a}{a}{q}$ to every state $q$. Obviously, this does not change the generated semigroup; however, now $a^\omega$ has a finite orbit.

    \paragraph{Periodic and Ultimately Periodic Words.}
    We have just seen that we can add transitions to any \SAut to even obtain a periodic and, thus, ultimately periodic $\omega$-word with a finite orbit (without changing the generated semigroup). We will see next that, in many cases, we do not even need to change the automaton: if there is an $\omega$-word with a finite orbit, then there is already an ultimately periodic word with finite orbit for every \SAut.\footnote{Contrasting this, we will later see in \autoref{prop:infiniteSemigroupWithAllPeriodicOrbitsFinite} that there are semigroups where the only infinite orbits belong to words that are not ultimately periodic.} If the \SAut is complete and reversible, we even have a periodic word with a finite orbit.

    \begin{prop}\label{prop:finiteOrbitImpliesPeriodicFiniteOrbit}
      Let $\mathcal{T} = (Q, \Sigma, \delta)$ be an \SAut. If there is an $\omega$-word $\alpha \in \Sigma^\omega$ such that its orbit $Q^* \circ \alpha$ is finite, then there are $u \in \Sigma^*$ and $v \in \Sigma^+$ such that $Q^* \circ u v^\omega$ is finite and $v$ can be chosen in such a way that it contains all letters that appear infinitely often in $\alpha$.
      
      If, in addition, $\mathcal{T}$ is complete and reversible, then we already have that the orbit $Q^* \circ v^\omega$ is finite.
    \end{prop}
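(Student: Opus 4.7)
The plan is to pass to the dual automaton $\partial\mathcal{T}$ and exploit the dictionary between $K$-orbits and the dual classes $L/K$ that is built up in the preliminaries. Combining \autoref{prop:KSchreierAndKOrbit} with \autoref{prop:KLInfiniteIffLKInfinite} (both applied with the suffix-closed $K = Q^*$ and the prefix-closed $L = \Pre\alpha$), the hypothesis $|Q^* \circ \alpha| < \infty$ is equivalent to $|\partial\Pre\alpha / Q^*| < \infty$ in $\partial\mathcal{T}$, i.\,e.\ the reversed prefixes $\partial\alpha_n$ take only finitely many values in the dual semigroup $\mathscr{S}(\partial\mathcal{T})$.

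A pigeonhole argument then provides indices $n_1 < n_2$ with $\partial\alpha_{n_1} = \partial\alpha_{n_2}$ in $\mathscr{S}(\partial\mathcal{T})$. Setting $u = \alpha_{n_1}$ and $v = a_{n_1 + 1} \dots a_{n_2}$, this becomes the single algebraic relation
\[
  (\partial v)(\partial u) = \partial u \quad\text{in } \mathscr{S}(\partial\mathcal{T})\text{,}
\]
which iterates to $(\partial v)^k \partial u = \partial u$ for every $k \geq 0$. To secure the letter condition I first pick $n_1$ large enough that the tail $\sigma^{n_1}\alpha$ only uses letters occurring infinitely often in $\alpha$; since the pigeonhole step supplies infinitely many valid choices of $n_2 > n_1$, I can additionally require $n_2$ to be so large that $v$ contains at least one occurrence of each such letter.

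To conclude that $Q^* \circ uv^\omega$ is finite I show that $\partial\Pre(uv^\omega)$ has finite image in $\mathscr{S}(\partial\mathcal{T})$ and then run \autoref{prop:KLInfiniteIffLKInfinite} and \autoref{prop:KSchreierAndKOrbit} in reverse. Decomposing $\Pre(uv^\omega) = \Pre u \cup \bigcup_{k \geq 0} uv^k \Pre v$, we get in $\mathscr{S}(\partial\mathcal{T})$ that
\[
  \partial\Pre(uv^\omega) \;=\; \partial\Pre u \,\cup\, \bigcup_{k \geq 0} \partial(\Pre v) \cdot (\partial v)^k \cdot \partial u \;=\; \partial\Pre u \,\cup\, \partial(\Pre v) \cdot \partial u\text{,}
\]
where the collapse of the union uses the iterated relation above, and the resulting set is finite because $\Pre u$ and $\Pre v$ are.

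For the complete and reversible case, $\partial\mathcal{T}$ is a \GAut, so $\mathscr{S}(\partial\mathcal{T}) \subseteq \mathscr{G}(\partial\mathcal{T})$ and $\partial u$ is invertible in the group; cancelling it in the relation yields $\partial v = 1$ in $\mathscr{G}(\partial\mathcal{T})$. By the cross-diagram duality $\partial v \circ_\partial \partial\bm{p} = \partial(\bm{p} \cdot v)$ recorded in the preliminaries, this is equivalent to $\bm{p} \cdot v = \bm{p}$ for every $\bm{p} \in Q^*$, and a direct calculation then gives $\bm{p} \circ v^\omega = (\bm{p} \circ v)^\omega$, so $Q^* \circ v^\omega = \{(\bm{p} \circ v)^\omega : \bm{p} \in Q^*\}$ has at most $|\Sigma|^{|v|}$ elements. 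The main subtlety throughout is keeping the reversal of direction straight between prefixes/suffixes and the $\circ$ versus $\cdot_\partial$ actions under dualisation; once the single relation $(\partial v)(\partial u) = \partial u$ is extracted, both halves of the proposition follow by clean algebraic manipulation.
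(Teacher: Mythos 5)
Your proof is correct. The first half---dualising via \autoref{prop:KSchreierAndKOrbit} and \autoref{prop:KLInfiniteIffLKInfinite}, pigeonholing the $\equiv_{Q^*}$-classes of reversed prefixes to extract the relation $(\partial v)(\partial u) = \partial u$ in $\mathscr{S}(\partial\mathcal{T})$, and dualising back after collapsing $\partial\Pre(uv^\omega)$ to the finite set $\partial\Pre u \cup \partial(\Pre v)\cdot\partial u$---is essentially the paper's argument, which phrases the same collapse as the observation that $\partial(uv^*)/Q^*$ is a singleton; you are in fact more careful than the paper about the letter condition, since the paper's literal choice $k = \min I$, $\ell = \min(I\setminus\{k\})$ does not by itself secure it. The reversible case is where you genuinely diverge. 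The paper stays with $uv^\omega$ and exhibits a surjection $Q^*\circ uv^\omega \to Q^*\circ v^\omega$ by stripping the length-$|u|$ prefix, using the bijectivity of ${}\cdot u$ from \autoref{fact:reversibleIsBijection} for surjectivity, which yields $|Q^*\circ v^\omega| \leq |Q^*\circ uv^\omega| < \infty$. You instead observe that $\partial\mathcal{T}$ is a \GAut, cancel $\partial u$ on the right in $\mathscr{G}(\partial\mathcal{T})$ to get $\partial v = 1$, and translate back to $\bm{p}\cdot v = \bm{p}$ for all $\bm{p}\in Q^*$, so that $Q^*\circ v^\omega = \{(\bm{p}\circ v)^\omega \mid \bm{p}\in Q^*\}$. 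Your route buys more: an explicit bound $|\Sigma|^{|v|}$ and the structural fact that every element of the orbit of $v^\omega$ is itself $|v|$-periodic; the paper's surjection buys the direct cardinality comparison with the orbit of $uv^\omega$ and avoids passing through the dual group. Both arguments use completeness and reversibility in the same essential way, namely the invertibility of the dual action.
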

    \begin{proof}
      Suppose we have $|Q^* \circ \alpha| < \infty$ for the $\omega$-word $\alpha = a_1 a_2 \dots$ with $a_1, a_2, \dots \in \Sigma$ and the \SAut $\mathcal{T}$. By \autoref{prop:KSchreierAndKOrbit}, this is equivalent to $|Q^* / \Pre \alpha| < \infty$, which, by \autoref{prop:KLInfiniteIffLKInfinite}, is equivalent to $|\partial \Pre \alpha / Q^*| < \infty$. Thus, there is an infinite set $I \subseteq \mathbb{N}$ with $a_i \dots a_1 \equiv_{Q^*} a_j \dots a_1$ for all $i, j \in I$. Let $k = \min I$ and $\ell = \min I \setminus \{ k \}$ and define
      \[
        u = a_1 \dots a_k \text{ and } v = a_{k + 1} \dots a_\ell \text{.}
      \]
      For this choice, we have that $\partial (u v^*) / Q^*$ contains only one element and that $\partial \Pre u v^\omega / Q^*\allowbreak = \{ [\partial w]_{Q^*} \mid w \in \Pre uv \}$ is still finite. By \autoref{prop:KLInfiniteIffLKInfinite}, this implies that $Q^* / \Pre uv^\omega$ is finite, which is the case if and only if $Q^* \circ uv^\omega$ is finite by \autoref{prop:KSchreierAndKOrbit}.
      
      If $\mathcal{T}$ is additionally complete and reversible, there is a surjective function $Q^* \circ uv^\omega \to Q^* \circ v^\omega$, which shows $|Q^* \circ v^\omega| \leq |Q^* \circ uv^\omega| < \infty$. This function maps a word $w \beta \in Q^* \circ uv^\omega$ with prefix $w$ of length $|w| = |u|$ to $\beta$. Clearly, $w \beta \in Q^* \circ uv^\omega$ implies $\beta \in Q^* \circ v^\omega$ and, to show that the function is surjective, consider an arbitrary element $\beta \in Q^* \circ v^\omega$. Then, there is some $\bm{q} \in Q^*$ with $\bm{q} \circ v^\omega = \beta$. Since $\mathcal{T}$ is complete and reversible, there is some $\bm{p} \in Q^{|\bm{q}|}$ with $\bm{p} \cdot u = \bm{q}$ (as the map $\!{}\cdot u$ is a length-preserving permutation in this case by \autoref{fact:reversibleIsBijection} and we can choose $\bm{p}$ as the pre-image of $\bm{q}$). This yields the cross diagram
      \begin{center}
        \begin{tikzpicture}[baseline=(m-4-1.base)]
          \matrix[matrix of math nodes, text height=1.25ex, text depth=0.25ex] (m) {
                   & u &        & v^\omega &    \\
            \bm{p} &   & \bm{q} &          & {} \\
                   & w &        & \beta    &    \\
          };
          \foreach \j in {1} {
            \foreach \i in {1, 3} {
              \draw[->] let
                \n1 = {int(2+\i)},
                \n2 = {int(1+\j)}
              in
                (m-\n2-\i) -> (m-\n2-\n1);
              \draw[->] let
                \n1 = {int(1+\i)},
                \n2 = {int(2+\j)}
              in
                (m-\j-\n1) -> (m-\n2-\n1);
            };
          };
        \end{tikzpicture}
      \end{center}
      for $w = \bm{p} \circ u$ and, thus, that $w \beta \in Q^* \circ uv^\omega$ is a pre-image of $\beta$ for our function.
    \end{proof}

    In the general case, when the automaton is not reversible, however, the existence of an $\omega$-word with finite orbit does not imply the existence of a periodic $\omega$-word with finite orbit; not even, if the automaton is complete and invertible.
    \begin{prop}\label{prop:groupWithOnlyInfinitePeriodicOrbits}
      Let $\mathcal{T} = (Q, \Sigma, \delta)$ be the \GAut
      \begin{center}
        \begin{tikzpicture}[auto, shorten >=1pt, >=latex, node distance=3cm]
          \node[state] (s) {$p$};
          \node[state, right=of s] (e) {$\id$};
          \node[state, right=of e] (t) {$q$};
          
          \draw[->] (s) edge node[above] {$0/1$} node [below, align=center] {$0'/0'$ \\ $1'/1'$} (e)
                    (s) edge[loop left] node {$1/0$} (s)
                    (t) edge[loop right] node[align=center] {$1'/0'$ \\ $0/0$ \\ $1/1$} (t)
                    (t) edge node[swap] {$0'/1'$} (e)
          ;
        \end{tikzpicture}
      \end{center}
      where $\id$ acts as the identity.

      Then, there exist an (ultimately periodic) $\omega$-word $\alpha \in \Sigma^\omega$ with finite orbit $Q^{\pm *} \circ \alpha$ but every periodic $\omega$-word $u^\omega$ with $u \in \Sigma^+$ has an infinite orbit $Q^* \circ u^\omega$.
    \end{prop}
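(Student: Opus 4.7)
The plan is to prove the two parts of the statement separately. For the existence of an ultimately periodic $\omega$-word with finite orbit, I would take the candidate $\alpha = 0\,0'\,1^\omega$ and verify by direct cross-diagram computation that
\[
  p\circ\alpha = 1\,0'\,1^\omega,\ q\circ\alpha = 0\,1'\,1^\omega,\ p^2\circ\alpha = q^2\circ\alpha = \alpha,\ pq\circ\alpha = qp\circ\alpha = 1\,1'\,1^\omega.
\]
These identities show that $p^2$, $q^2$ and $[p,q]$ all fix $\alpha$, so $Q^{\pm*}\circ\alpha$ is contained in $\{00'1^\omega,\,10'1^\omega,\,01'1^\omega,\,11'1^\omega\}$, a set of four visibly distinct $\omega$-words, so the orbit is finite of size four.

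For the second assertion, fix $u\in\Sigma^+$. Since $\mathcal{T}$ is a \GAut it suffices to show that $Q^{\pm*}\circ u^\omega$ is infinite. If $u\in\{0,1\}^+$, the transitions of $\mathcal{T}$ show that $q$ loops at itself on $0$ and $1$ writing the same letter, so $q$ fixes $u^\omega$, while the restriction of $p$ to $\{0,1\}^\omega$ is exactly the adding machine of \autoref{ex:addingMachine}. Since the adding machine acts freely on $\{0,1\}^\omega$, the set $\{p^n\circ u^\omega:n\in\mathbb{Z}\}$ is already infinite.

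If instead $u$ contains a primed letter, I would introduce the primed-subsequence map $\pi:\Sigma^\omega\to\{0',1'\}^*\cup\{0',1'\}^\omega$ obtained by deleting every unprimed letter, and argue that $q$ acts on $\pi(\alpha)$ exactly as the adding machine under the identification $0'\leftrightarrow 0$, $1'\leftrightarrow 1$. Indeed, on $0$ and $1$ the state $q$ loops writing the same letter (so unprimed positions are preserved and the running state depends only on primed letters); on $1'$ it loops writing $0'$ (flipping a one-bit in the running increment); and on $0'$ it writes $1'$ and enters $\id$ (flipping the first zero-bit and terminating). This matches the adding machine exactly, including the carry-lost-at-infinity regime when $\pi(\alpha)=(1')^\omega$: here $q$ never reaches $\id$ and the primed subsequence becomes $(0')^\omega$. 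Iterating gives $\pi(q^n\circ u^\omega)=\pi(u^\omega)+n$ for every $n\in\mathbb{Z}$. Since $u$ has a primed letter, $\pi(u^\omega)$ is an infinite periodic word over $\{0',1'\}$, and because the adding machine acts freely, the words $\pi(u^\omega)+n$ are pairwise distinct; therefore so are the $\omega$-words $q^n\circ u^\omega$.

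The main point that requires care is the adding-machine claim for $q$ on primed subsequences; once $\pi$ is set up, it is a direct case analysis on the four transitions of $q$, but the induction for $q^n$ must handle both the generic regime (where each application of $q$ does reach $\id$) and the edge case in which the primed subsequence is $(1')^\omega$.
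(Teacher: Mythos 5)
Your proposal is correct in substance and follows essentially the same route as the paper: a direct cross-diagram computation exhibiting an ultimately periodic word with finite orbit, followed by a reduction of both cases of the periodic-word claim to the adding machine (the paper uses the witness $1'0^\omega$, whose orbit has size two, instead of your $00'1^\omega$, and phrases your projection $\pi$ as ``reading the unprimed blocks does not change the state in $q$ or $\id$'', but these are cosmetic differences). One small logical slip in the first part: knowing that $p^2$, $q^2$ and $[p,q]$ fix $\alpha$ does not by itself bound the orbit by four, since the stabilizer need not contain the normal closure of these elements; to conclude, you should instead verify that the four-element set $\{00'1^\omega, 10'1^\omega, 01'1^\omega, 11'1^\omega\}$ is closed under the action of $p$ and $q$ (which it is --- $p$ and $q$ each act on it as an involution --- so it is also closed under $p^{-1}$ and $q^{-1}$ and hence contains $Q^{\pm*}\circ\alpha$). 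This costs only the two additional one-line computations $p\circ 11'1^\omega = 01'1^\omega$ and $q\circ 11'1^\omega = 10'1^\omega$.
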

    \begin{proof}
      Let $\alpha = 1' 0^\omega$ and $\alpha' = 0' 0^\omega$. It is easy to see that $p \circ \alpha = \alpha$, $p \circ \alpha' = \alpha'$, $q \circ \alpha = \alpha'$ and $q \circ \alpha' = \alpha$.
      Thus, we have $Q^{\pm *} \circ \alpha = \{ \alpha, \alpha' \}$, which is finite.
      
      To see that the orbit of every periodic word is infinite, let $u \in \Sigma^+$ be arbitrary. We distinguish two cases: $u \in \{ 0, 1 \}^+$ or $u$ contains a $0'$ or a $1'$. For the first case, observe that we obtain the adding machine (see \autoref{ex:addingMachine}) if we remove the state $q$ and the letters $0'$ and $1'$ from $\mathcal{T}$. Thus, for $u \in \{ 0, 1 \}^+$, the orbit of $u^\omega$ is infinite; in fact, we already have $|p^* \circ u^\omega| = \infty$.
      
      In the other case, we can factorize $u = u_0 a_1 u_1 a_2 \dots a_n u_n$ with $a_1, a_2, \dots, a_n \in \{ 0', 1' \}$ and $u_0, u_1, \dots, u_n \in \{ 0, 1 \}^*$. Similarly to the other case, we observe that we obtain the adding machine (with letters $0'$ and $1'$) from $\mathcal{T}$ if we remove the state $p$ and the letters $0$ and $1$. Thus, we have $|q^* \circ (u')^\omega| = \infty$ where $u' = a_1 a_2 \dots a_n$ is obtained by removing all letters in $\{ 0, 1 \}$ from $u$. By the construction of the automaton, reading any of the blocks $u_i \in \{ 0, 1 \}$ does not change the state as long as we are in $q$ or in $\id$. Therefore, it follows easily that $q^* \circ u^\omega$ and, thus, the orbit of $u^\omega$ remain infinite.
    \end{proof}

    \paragraph*{Undecidability of Orbit Finiteness.}
    Algorithmically, it is not possible to decide whether a given $\omega$-word has a finite or infinite orbit. This can be seen from the connection stated in \autoref{thm:torsionIsFiniteDualOrbit}.
    \begin{prop}\label{prop:undecidabilityOfOrbitFiniteness}
      There is some complete and reversible \SAut whose orbit finiteness problem for periodic $\omega$-words
      \problem[
        a complete and reversible \SAut $\mathcal{T} = (Q, \Sigma, \delta)$
      ]{
        a finite word $u \in \Sigma^+$
      }{
        is $Q^* \circ u^\omega$ finite?
      }\noindent
      is undecidable.
    \end{prop}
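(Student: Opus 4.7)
The plan is to reduce Gillibert's undecidable order problem for automaton groups, which we already invoked in the proof of \autoref{thm:generalizedGroupFinitenessProblem}, to the orbit finiteness problem for periodic $\omega$-words via \autoref{thm:torsionIsFiniteDualOrbit}. Let $\mathcal{R} = (P, \Gamma, \tau)$ be a $\mathscr{G}$-au\-to\-ma\-ton for which the order problem (given $\bm{p} \in P^+$, does $\bm{p}$ have finite order in $\mathscr{G}(\mathcal{R})$?) is undecidable. As the fixed constant in our problem, we choose $\mathcal{T} = \partial \mathcal{R}$. Since $\mathcal{R}$ is complete and invertible, $\mathcal{T}$ is complete and reversible, and its state set is $\Gamma$ while its alphabet is $P$; so an input $u \in P^+$ to the orbit finiteness problem is exactly a non-empty state sequence of $\mathcal{R}$.

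The reduction simply maps $\bm{p} \in P^+$ to $u = \rev{\bm{p}} \in P^+$, which is clearly computable. Applying \autoref{thm:torsionIsFiniteDualOrbit} to $\mathcal{R}$ with $\bm{q} = u$, we obtain the equivalence
\[
  \rev{u} \text{ has torsion in } \mathscr{S}(\mathcal{R}) \iff |\Gamma^* \circ_{\partial \mathcal{R}} u^\omega| < \infty \text{,}
\]
and the right-hand side is precisely the question whether $Q^* \circ_{\mathcal{T}} u^\omega$ is finite, where $Q = \Gamma$ is the state set of $\mathcal{T}$. Since $\mathcal{R}$ is invertible, the semigroup $\mathscr{S}(\mathcal{R})$ embeds into the group $\mathscr{G}(\mathcal{R})$ (the states act as bijections), and an equation $s^i = s^j$ holds in the semigroup if and only if it holds in the group; combined with the fact that torsion and finite order coincide for group elements, this yields that $\rev{u} = \bm{p}$ has torsion in $\mathscr{S}(\mathcal{R})$ if and only if $\bm{p}$ has finite order in $\mathscr{G}(\mathcal{R})$.

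Putting these equivalences together, deciding the orbit finiteness of $u^\omega$ under $\mathcal{T}$ is equivalent to deciding whether $\bm{p}$ has finite order in $\mathscr{G}(\mathcal{R})$, so undecidability of the latter transfers. The main (and really only) subtlety is the careful bookkeeping of the dualization: one has to verify that the action of $\mathcal{T} = \partial \mathcal{R}$ on $u^\omega$ is literally the dual action $\circ_{\partial \mathcal{R}}$ appearing in \autoref{thm:torsionIsFiniteDualOrbit}, and that the translation $\bm{p} \mapsto \rev{\bm{p}}$ does not shift the problem outside the scope where Gillibert's undecidability applies (which is obvious, since reversing is a computable involution on $P^+$).
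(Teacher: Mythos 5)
Your proposal is correct and follows essentially the same route as the paper: fix a \GAut with undecidable order problem, take its dual as the constant automaton, map the state sequence $\bm{p}$ to the periodic word $(\rev{\bm{p}})^\omega$, and conclude via \autoref{thm:torsionIsFiniteDualOrbit}. The only difference is that you spell out the (correct, but glossed over in the paper) equivalence between torsion in $\mathscr{S}(\mathcal{R})$ and finite order in $\mathscr{G}(\mathcal{R})$.
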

    \begin{proof}
      There is an automaton group with an undecidable order problem \cite{bartholdi2017wordAndOrderProblems, gillibert2018automaton}, i.\,e.\ there is a \GAut $\mathcal{T}$ such that the problem
      \problem[
        a \GAut $\mathcal{T} = (Q, \Sigma, \delta)$
      ]{
        a state sequence $\bm{q} \in Q^*$
      }{
        is $\bm{q}$ of finite order in $\mathscr{G}(\mathcal{T})$?
      }\noindent
      is undecidable. We reduce this problem to the one in the proposition: as the automaton, we use $\partial \mathcal{T}$, the dual of $\mathcal{T}$, which -- as the dual of a \GAut -- is reversible and complete; the word is $(\rev{\bm{q}})^\omega$, which is periodic. By \autoref{thm:torsionIsFiniteDualOrbit}, the orbit of $(\rev{\bm{q}})^\omega$ under the action of the dual $\partial \mathcal{T}$ is finite if and only if $\bm{q} \circ{}\!$ has torsion (i.\,e.\ is of finite order).
    \end{proof}

    By \autoref{cor:infiniteSemigroupsHaveInfiniteOrbits}, the question whether a given (complete) \SAut admits a word with an infinite orbit or not is equivalent to the finiteness problem for automaton semigroups and, thus, undecidable \cite{Gilbert13}.\footnote{We have already discussed this for the finiteness problem for automaton groups above.} Here, we show a dual result: checking the existence of an $\omega$-word with a finite orbit is undecidable, even for complete and reversible \SAuta.

    \begin{prop}\label{prop:existenceOfWordWithFiniteOrbitIsUndecidable}
      The decision problem
      \problem{
        a complete and reversible \SAut $\mathcal{T} = (Q, \Sigma, \delta)$
      }{
        is there an $\omega$-word $\alpha \in \Sigma^\omega$ such that $|Q^{*} \circ \alpha|<\infty$?
      }
      \noindent{}is undecidable.
    \end{prop}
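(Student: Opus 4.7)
The plan is to reduce from the undecidable problem of \autoref{prop:undecidabilityOfOrbitFiniteness}: given a complete and reversible \SAut $\mathcal{T} = (Q, \Sigma, \delta)$ together with a word $u \in \Sigma^+$, decide whether $Q^* \circ u^\omega$ is finite. From such an instance $(\mathcal{T}, u)$ we would build a new complete and reversible \SAut $\mathcal{T}' = (Q', \Sigma', \delta')$ (with both $\mathcal{T}$ and $u$ encoded into its transition table) such that $\mathcal{T}'$ admits some $\omega$-word with a finite orbit if and only if $Q^* \circ u^\omega$ is finite in $\mathcal{T}$. Since the reduction is obviously computable, undecidability of the existence problem then follows.

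For the easy direction of the equivalence, the construction would be arranged so that an explicit lift of $u^\omega$ into $(\Sigma')^\omega$ has a finite $(Q')^*$-orbit in $\mathcal{T}'$ whenever $u^\omega$ has a finite $Q^*$-orbit in $\mathcal{T}$; this is essentially built into the embedding of $\mathcal{T}$ inside $\mathcal{T}'$.

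For the hard direction, assume $\mathcal{T}'$ admits some finite-orbit $\omega$-word. Because $\mathcal{T}'$ is complete and reversible, \autoref{prop:finiteOrbitImpliesPeriodicFiniteOrbit} produces a periodic $v^\omega \in (\Sigma')^\omega$ with $(Q')^* \circ v^\omega$ finite. Applying \autoref{thm:torsionIsFiniteDualOrbit} to the \GAut $\partial \mathcal{T}'$ translates finiteness of this orbit into $\partial v$ being a torsion element of $\mathscr{S}(\partial \mathcal{T}')$. The design of $\mathcal{T}'$ would force every such $v$ to ``decode'' (modulo taking a power and cyclic rotation) to the original word $u$ viewed inside $\Sigma'$. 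Passing back through \autoref{thm:torsionIsFiniteDualOrbit} on $\partial \mathcal{T}$, this gives that $\partial u$ is torsion in $\mathscr{S}(\partial \mathcal{T})$, equivalently $Q^* \circ u^\omega$ is finite in $\mathcal{T}$.

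Concretely, the construction would extend $\Sigma$ by a marker letter and per-position copies of the letters of $\Sigma$, and extend $Q$ by a cyclic ``position counter'' together with control states, implementing a filter: periodic words synchronised with the $|u|$-block pattern of $u$ are acted on by an embedded copy of $\mathcal{T}$, while any deviation from this pattern triggers an adding-machine-like gadget on the marker letter, producing an infinite orbit (compare \autoref{prop:groupWithOnlyInfinitePeriodicOrbits}). The main obstacle will be designing this filter so that $\mathcal{T}'$ is simultaneously complete and reversible: the convenient completion trick of routing unmatched transitions into an identity state, used for instance in the proof of \autoref{thm:generalizedGroupFinitenessProblem}, destroys input-codeterminism and so is unavailable here. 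The gadget must instead be built symmetrically, so that for every pair $(a, q) \in \Sigma' \times Q'$ there is exactly one transition of $\delta'$ with output letter $a$ and target state $q$; arranging this while still forcing the decoding to $u$ is the technical heart of the argument.
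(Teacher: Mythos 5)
Your proposal is not a proof but a plan whose hardest step is left open: you never exhibit the automaton $\mathcal{T}'$, and you yourself identify the obstruction --- making the ``filter'' gadget simultaneously complete and reversible --- as the technical heart of the argument. That obstruction is real and severe. In a complete and reversible \SAut every map ${}\!\cdot v$ is a length-preserving bijection on state sequences (\autoref{fact:reversibleIsBijection}), every connected component is strongly connected, and for every pair of an output letter and a target state there must be exactly one matching transition; the usual device of routing all unwanted behaviour into an identity sink is unavailable, and it is far from clear that one can build a complete reversible automaton in which the \emph{only} periodic words with finite orbit are (rotations and powers of) a prescribed $u$. Your hard direction needs exactly that exclusivity: \autoref{prop:finiteOrbitImpliesPeriodicFiniteOrbit} only hands you \emph{some} periodic $v^\omega$ with finite orbit, and any stray finite-orbit periodic word in the gadget (a marker loop, a word avoiding the embedded copy of $\mathcal{T}$) destroys the reduction. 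As written, the claimed equivalence is unsubstantiated.

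The paper avoids all of this by reducing from a different undecidable problem, namely whether a given \GAut admits a state sequence $\bm{q} \in Q^+$ such that $\bm{q} \circ{}\!$ is the identity \cite[Theorem~1]{decidabilityPart}; the reduction is then literally ``take the dual,'' with no gadget at all. If such a $\bm{q}$ exists, it in particular has torsion, so $(\rev{\bm{q}})^\omega$ has a finite orbit under the dual action by \autoref{thm:torsionIsFiniteDualOrbit}; conversely, a finite-orbit $\omega$-word for the dual yields a finite-orbit \emph{periodic} word $\bm{q}^\omega$ by \autoref{prop:finiteOrbitImpliesPeriodicFiniteOrbit}, hence $\rev{\bm{q}} \circ{}\!$ has torsion in the group and some power of it is the identity. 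Note that your machinery (\autoref{prop:finiteOrbitImpliesPeriodicFiniteOrbit} plus \autoref{thm:torsionIsFiniteDualOrbit}) is the right one; the fix is to change the source problem of the reduction so that the existential quantifier over words is matched by an existential quantifier over state sequences, rather than trying to pin the witness down to a single prescribed $u$ by a construction you have not carried out.
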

    \begin{proof}
      By \cite[Theorem~1]{decidabilityPart}, the problem
      \problem{
        a \GAut $\mathcal{T} = (Q, \Sigma, \delta)$
      }{
        is there a state sequence $\bm{q} \in Q^+$ such that $\bm{q} \circ{}\!$ is the identity?
      }
      \noindent{}is undecidable. We reduce this problem to the one in the proposition by taking the dual automaton. Obviously, the dual of a \GAut is complete and reversible.

      Now, suppose that there is some $\bm{q} \in Q^+$ such that $\bm{q} \circ{}\!$ is the identity. Then, $\bm{q} \circ{}\!$, in particular, has torsion. By \autoref{thm:torsionIsFiniteDualOrbit}, this implies that $(\rev{q})^\omega$ has a finite orbit under the action of the dual.

      If, on the other hand, there is some word with a finite orbit under the action of the dual, then, by \autoref{prop:finiteOrbitImpliesPeriodicFiniteOrbit}, this implies that there already is some periodic word $\bm{q}^\omega$ with a finite orbit (where $\bm{q} \in Q^+$). Again, by \autoref{thm:torsionIsFiniteDualOrbit}, this implies that $\rev{\bm{q}} \circ{}\!$ has torsion in the group generated by the original automaton. In other words, there is $k \geq 1$ such that $\rev{\bm{q}}^k \circ{}\!$ is the identity.
    \end{proof}
  \end{section}
  \begin{section}{Reversible but not Bi-Reversible Automata}
    In this section, we have a closer look at the class of automaton groups generated by reversible but not bi-reversible \GAuta.
    
    First, we show that every such \GAut $\mathcal{T}$ admits a periodic word $u^\omega$ with an infinite orbit. The main idea is to take the dual of $\mathcal{T}$, which is a reversible but not bi-reversible \GAut as well, and to find elements without torsion in the semigroup generated by the dual. By \autoref{thm:torsionIsFiniteDualOrbit}, these elements correspond to periodic $\omega$-words with infinite orbits. For the special case that the dual is connected (or only contains non-bi-reversible connected components), we could use \cite[Theorem 23]{GKP} or \cite[Proposition~7]{DaRo16} to obtain that none of the elements have torsion. However, the result is also true in the general case.

    In a reversible and complete \SAut $\mathcal{T} = (Q, \Sigma, \delta)$, all maps ${}\!\cdot u: Q \to Q$ for $u \in \Sigma^*$ are bijections (see \autoref{fact:reversibleIsBijection}). It is not difficult to see that, therefore, in such automata, every connected component is already strongly connected.
    
    The central argument for our proof is that the \emph{semigroup} generated by a reversible but not bi-reversible \GAut cannot contain the (group) inverse of any function induced by a state from a non-bi-reversible connected component.
    \begin{lemma}\label{lem:nonbireversibleComponentDoesNotAllowInversesInSemigroup}
      Let $\mathcal{T} = (Q, \Sigma, \delta)$ be a reversible \GAut with a non-bi-reversible (strongly) connected component consisting of the states $P \subseteq Q$. Then, $\mathscr{S}(\mathcal{T})$ does not contain the inverse $p^{-1} \circ{}\!$ for any $p \in P$.
    \end{lemma}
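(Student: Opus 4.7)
My plan is to argue by contradiction. Suppose there is some $p \in P$ and $\bm{q} \in Q^+$ with $p^{-1} \circ{} = \bm{q} \circ{}$ in $\mathscr{G}(\mathcal{T})$, so that $\bm{s} := p \bm{q} \in Q^+$ induces the identity on $\Sigma^*$.

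The first step is a propagation principle: if $\bm{s} \circ{}$ is the identity, then so is $(\bm{s} \cdot a) \circ{}$ for every $a \in \Sigma$, as one sees by applying $\bm{s}$ to $au$ for arbitrary $u$ and reading off the cross diagram. Unwinding $(p\bm{q}) \cdot a = (\bm{q} \cdot (p \circ a))(p \cdot a)$ and peeling off the rightmost state shows $(p \cdot a)^{-1} \circ{} \in \mathscr{S}(\mathcal{T})$, with positive representation $\bm{q} \cdot (p \circ a)$. Iterating on words $u \in \Sigma^*$ and using that $P$ is strongly connected (since, by \autoref{fact:reversibleIsBijection}, each $\cdot a$ is a permutation of $Q$, so weak and strong connectivity coincide), every $p' \in P$ admits a positive representation $\bm{q}_{p'} \in Q^{|\bm{q}|}$ of $(p')^{-1}$.

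The second step extracts a structural consequence of non-bi-reversibility. The transition data restricts to a map
\[
  \Phi \colon P \times \Sigma \longrightarrow \Sigma \times P, \qquad (r, a) \longmapsto (r \circ a,\; r \cdot a),
\]
which is well-defined because $P$ is $\cdot$-invariant. Non-bi-reversibility of $P$ is precisely the failure of injectivity of $\Phi$; since $|P \times \Sigma| = |\Sigma \times P|$ is finite, $\Phi$ also fails to be surjective, so I can fix a pair $(b^{*}, q^{*}) \in \Sigma \times P$ lying outside the image of $\Phi$, i.e., no transition of $\delta$ enters $q^{*}$ with output letter $b^{*}$.

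The final step is to combine the two. Using the positive representation $\bm{q}_{q^{*}}$ of $(q^{*})^{-1}$ from Step~1, together with those of the inverses of the predecessors of $q^{*}$ under the various letters, I would confront these identity-acting sequences with the cross-diagram constraint forced by reading a suitably chosen finite input ending in $b^{*}$; the goal is to exhibit a transition of $\delta$ entering $q^{*}$ with output $b^{*}$, contradicting Step~2. I expect the duality tools \autoref{prop:KLInfiniteIffLKInfinite} and \autoref{cor:alphaAndPiOrbit} to be useful for bridging the two layers, by translating the identity action of $\bm{s}$ into a finiteness statement on a suitable dual orbit and then reading back a transition-level consequence. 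Making this final combinatorial step precise is the main obstacle I anticipate, since non-bi-reversibility is a local property of $\delta$ whereas the hypothesis $p^{-1} \in \mathscr{S}(\mathcal{T})$ is a global statement about induced functions on $\Sigma^{*}$.
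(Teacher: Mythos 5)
Your Step~1 arrives at the same intermediate conclusion as the first half of the paper's proof (every state of $P$ has its inverse realized by a positive word in $\mathscr{S}(\mathcal{T})$), though note that under the paper's convention the \emph{leftmost} state acts last, so $(p\bm{q}) \cdot a = \bigl(p \cdot (\bm{q} \circ a)\bigr)\bigl(\bm{q} \cdot a\bigr)$ rather than the formula you wrote; the conclusion survives this correction. Your Step~2 is also a correct observation: non-bi-reversibility of the component is exactly non-injectivity of $\Phi$, hence non-surjectivity by counting.

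The genuine gap is Step~3, which you acknowledge is missing, and the plan you sketch for it points in an unpromising direction. The hypothesis that $(q^{*})^{-1} \circ{}$ lies in $\mathscr{S}(\mathcal{T})$ constrains only the transitions \emph{leaving} $q^{*}$ (the values $(q^{*})^{-1} \circ w$ and $(q^{*})^{-1} \cdot w$ are computed along outgoing paths of $q^{*}$), whereas your missed pair $(b^{*}, q^{*})$ is a statement about the transitions \emph{entering} $q^{*}$; reading inputs ending in $b^{*}$ cannot force a transition into $q^{*}$ with output $b^{*}$, and the duality results you invoke concern infiniteness of orbits and play no role here. What actually closes the argument is the pigeonhole-dual of your non-surjectivity point: some pair is hit twice, i.\,e.\ there are transitions $\trans{p}{a}{c}{r}$ and $\trans{q}{b}{c}{r}$ with $a \neq b$ (forced by reversibility) and $p \neq q$ (forced by invertibility). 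Writing $p^{-1} \circ{} = \bm{p}' \circ{}$ and $q^{-1} \circ{} = \bm{q}' \circ{}$ with $\bm{p}', \bm{q}' \in Q^{+}$, one gets $\bm{p}' \cdot c \circ{} = r^{-1} \circ{} = \bm{q}' \cdot c \circ{}$, and the key trick is to \emph{rewind}: by \autoref{fact:reversibleIsBijection} the map ${}\cdot c$ permutes each $Q^{n}$, so some power $k$ satisfies $\bm{p}' \cdot c^{k} = \bm{p}'$ and $\bm{q}' \cdot c^{k} = \bm{q}'$, whence applying ${}\cdot c^{k-1}$ to the displayed equality gives $\bm{p}' \circ{} = \bm{q}' \circ{}$ and therefore $a = p^{-1} \circ c = q^{-1} \circ c = b$, a contradiction. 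This step --- exploiting that equality of actions after applying ${}\cdot c$ can be pulled back to equality before, because ${}\cdot c$ has finite order as a permutation --- is the ingredient your proposal lacks.
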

    \begin{proof}
      We first show that $\mathscr{S}(\mathcal{T})$ must contain ${q}^{-1} \circ{}\!$ for all $q \in P$ if it contains ${p}^{-1} \circ{}\!$ for a single $p \in P$. Therefore, assume the latter to be true. Since $q$ and $p$ are in the same (strongly) connected component, there is some $u \in \Sigma^*$ with $p \cdot u = q$. Then, for $v = p \circ u$, we have ${q}^{-1} \circ{}\! = {p}^{-1} \cdot v \circ{}\!$. Since we have ${p}^{-1} \circ{}\! = \bm{p}' \circ{}\!$ for some $\bm{p}' \in Q^+$ by assumption, we obtain ${q}^{-1} \circ{}\! = \bm{p}' \cdot v \circ{}\!$, which is in $\mathscr{S}(\mathcal{T})$.
      
      Now, assume to the contrary that $\mathscr{S}(\mathcal{T})$ contains ${p}^{-1} \circ{}\!$ for one, and thus for all, $p \in P$. Since $P$ is the state set of some non-bi-reversible component, there are $q, p, r \in P$ and $a, b, c \in \Sigma$ with $p \neq q$ or $a \neq b$ (or both) and the transitions
      \begin{center}
        \begin{tikzpicture}[auto, shorten >=1pt, >=latex]
          \node[state] (p) {$p$};
          \node[state, below right=0.25cm and 2cm of p] (r) {$r$};
          \node[state, below left=0.25cm and 2cm of r] (q) {$q$};
          
          \draw[->] (p) edge node {$a / c$} (r)
                    (q) edge node[swap] {$b / c$} (r);
        \end{tikzpicture}
      \end{center}
      (i.\,e.\ $p \cdot a = q \cdot b = r$ and $p \circ a = q \circ b = c$). Notice that $p = q$ (which implies $a \neq b$) is not possible since $\mathcal{T}$ is a \GAut and that neither is $a = b$ because $\mathcal{T}$ is reversible.
      
      We have ${p}^{-1} \cdot c \circ{}\! = {r}^{-1} \circ{}\! = {q}^{-1} \cdot c \circ{}\!$ and, by assumption, there are $\bm{p}', \bm{q}' \in Q^+$ with ${p}^{-1} \circ{}\! = \bm{p}' \circ{}\!$ and ${q}^{-1} \circ{}\! = \bm{q}' \circ{}\!$. Together, this yields $\bm{p}' \cdot c \circ{}\! = {r}^{-1} \circ{}\! = \bm{q}' \cdot c \circ{}\!$. Since $\mathcal{T}$ is complete and reversible, the map ${}\! \cdot c: Q^+ \to Q^+$ is a bijection (see \autoref{fact:reversibleIsBijection}) and there is some $k \geq 1$ with $\bm{p}' \cdot c^k = \bm{p}'$ and $\bm{q}' \cdot c^k = \bm{q}'$. For this $k$, we have
      \[
        {p}^{-1} \circ{}\! = \bm{p}' \circ{}\! = (\bm{p}' \cdot c) \cdot c^{k - 1} \circ{}\! = (\bm{q}' \cdot c) \cdot c^{k - 1} \circ{}\! = \bm{q}' \circ{}\! = {q}^{-1} \circ{}\!
      \]
      and, thus, $a = {p}^{-1} \circ c = {q}^{-1} \circ c = b$, which is a contradiction.
    \end{proof}
    
    Actually, \autoref{lem:nonbireversibleComponentDoesNotAllowInversesInSemigroup} allows for a stronger formulation:
    
    \begin{lemma}\label{lem:reversibleNotBireversibleGAutomataArentTorsion}
      Let $\mathcal{T} = (Q, \Sigma, \delta)$ be a reversible \GAut and let $P$ be the non-empty state set of some non-bi-reversible connected component of $\mathcal{T}$. Then, $P$ contains at least two elements and no element $\bm{q} p \circ{}\!$ with $\bm{q} \in Q^+$ and $p \in P$ has an inverse in $\mathscr{S}(\mathcal{T})$. In particular, no element $\bm{q} p \circ{}\!$ has torsion in $\mathscr{G}(\mathcal{T})$.
    \end{lemma}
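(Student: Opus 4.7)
My plan is to bootstrap both assertions off of \autoref{lem:nonbireversibleComponentDoesNotAllowInversesInSemigroup}: that lemma already forbids $p^{-1} \circ{}\!$ from lying in $\mathscr{S}(\mathcal{T})$ for any single $p \in P$, and I intend to propagate the absence of an inverse to arbitrary products $\bm{q}p \circ{}\!$ by a one-line group-theoretic manipulation, and then derive the non-torsion statement as an immediate corollary. To establish $|P| \geq 2$, I would argue by contradiction: if $P = \{p\}$, then completeness of $\mathcal{T}$ together with $P$ being a (strongly) connected component forces every transition at $p$ to be a self-loop $\trans{p}{a}{b}{p}$, and non-bi-reversibility of the component then forces, exactly as in the proof of \autoref{lem:nonbireversibleComponentDoesNotAllowInversesInSemigroup} with $p = q = r$, two such loops $\trans{p}{a}{c}{p}$ and $\trans{p}{b}{c}{p}$ with $a \neq b$, which contradicts the invertibility of $\mathcal{T}$.

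For the main claim, I would suppose for contradiction that some $\bm{r} \in Q^+$ gives an element $\bm{r} \circ{}\! \in \mathscr{S}(\mathcal{T})$ that serves as the group inverse of $\bm{q}p \circ{}\!$. Then $\bm{r}\bm{q}p \circ{}\! = \id$ in $\mathscr{G}(\mathcal{T})$, and multiplying $\bm{r} \circ{}\! = p^{-1} \bm{q}^{-1} \circ{}\!$ on the right by $\bm{q} \circ{}\!$ yields $p^{-1} \circ{}\! = \bm{r}\bm{q} \circ{}\!$. Since $\bm{r}\bm{q} \in Q^+$, this places $p^{-1} \circ{}\!$ into $\mathscr{S}(\mathcal{T})$, directly contradicting \autoref{lem:nonbireversibleComponentDoesNotAllowInversesInSemigroup}.

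The torsion statement then follows at once: if $\bm{q}p$ had finite order $n \geq 1$ in $\mathscr{G}(\mathcal{T})$, the element $(\bm{q}p)^{n-1} \circ{}\!$ for $n \geq 2$ (or $\bm{q}p \circ{}\!$ itself when $n = 1$, since it would then equal $\id$) would serve as a group inverse of $\bm{q}p \circ{}\!$ inside $\mathscr{S}(\mathcal{T})$, which we just ruled out. All of the substantive content has been packed into \autoref{lem:nonbireversibleComponentDoesNotAllowInversesInSemigroup}; the only mildly delicate step here is checking that ruling out $|P| = 1$ genuinely needs both invertibility and non-bi-reversibility working together, so that the extended claim is not vacuous.
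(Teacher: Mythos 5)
Your proposal is correct and follows essentially the same route as the paper: both reduce everything to \autoref{lem:nonbireversibleComponentDoesNotAllowInversesInSemigroup} by observing that a left inverse $\bm{r}\circ{}\!$ of $\bm{q}p\circ{}\!$ yields $\bm{r}\bm{q}\circ{}\! = p^{-1}\circ{}\!$ with $\bm{r}\bm{q}\in Q^+$, and then treat torsion as a special case. The only cosmetic difference is in showing $|P|\geq 2$, where you argue by contradiction from $P=\{p\}$ (and what forces the self-loops is really the connected-component property rather than completeness) while the paper directly exhibits two distinct source states of transitions violating inverse-reversibility; the content is the same.
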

    \begin{proof}
      Since the connected component belonging to $P$ is non-bi-reversible (but needs to be reversible since $\mathcal{T}$ is), it contains transitions $\trans{q}{a}{b}{p}$ and $\trans{q'}{a'}{b}{p}$ with $a \neq a'$ and $q \neq q'$, as $q = q'$ contradicts the invertibility of $\mathcal{T}$. Therefore, $\{ q, q', p \} \subseteq P$ contains at least two elements.

      Now, let $p \in P$ and $\bm{q} \in Q^*$ be arbitrary and suppose that there is some $\bm{r} \in Q^+$ such that $\bm{r} \bm{q} p \circ{}\!$ is the identity on $\Sigma^*$. Then, $\bm{r} \bm{q} \circ{}\! \in \mathscr{S}(\mathcal{T})$ would be an inverse of $p \circ{}\!$ contradicting \autoref{lem:nonbireversibleComponentDoesNotAllowInversesInSemigroup}.

      Finally, if $\bm{q} p \circ{}\!$ was of torsion, then $(\bm{q} p)^i \circ{}\!$ for some $i$ would be its inverse, which would also constitute a contradiction because of $(\bm{q} p)^i \in Q^+$.
    \end{proof}
    
    We can now apply \autoref{lem:reversibleNotBireversibleGAutomataArentTorsion} to obtain periodic $\omega$-words with infinite orbits.

    \begin{thm}\label{thm:reversibleNotBireversibleGroupsHaveInfiniteOrbitsAtPeriodicPoints}
      Let $\mathcal{T} = (Q, \Sigma, \delta)$ be an \SAut such that its dual $\partial \mathcal{T} = (\Sigma, Q, \partial \delta)$ contains a reversible \GAut $\mathcal{D} = (\Delta, R, \kappa)$ as a sub-automaton.

      Then, every non-bi-reversible connected component of $\mathcal{D}$ with state set $\Gamma$ contains at least two elements and $Q^* \circ u(av)^\omega$ is infinite for all $a \in \Gamma$ and all $u, v \in \Delta^*$.
    \end{thm}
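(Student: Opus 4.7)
The plan is to apply \autoref{lem:reversibleNotBireversibleGAutomataArentTorsion} to the sub-automaton $\mathcal{D}$ to rule out torsion for a suitable element, convert this into infiniteness of the orbit of the periodic word $(av)^\omega$ via \autoref{thm:torsionIsFiniteDualOrbit} applied to $\mathcal{D}$, upgrade this to infiniteness of the orbit of $u(av)^\omega$ using \autoref{fact:reversibleIsBijection}, and finally observe that the orbit produced lives inside $Q^* \circ u(av)^\omega$ because $\partial \mathcal{D}$ is a sub-automaton of $\mathcal{T}$. The conceptual point to keep straight is that $\mathcal{D} \subseteq \partial \mathcal{T}$, so its own action already plays the role of the dual action with respect to $\mathcal{T}$; symmetrically, $\partial \mathcal{D} = (R, \Delta, \partial \kappa)$ is a sub-automaton of $\partial \partial \mathcal{T} = \mathcal{T}$ (with $R \subseteq Q$, $\Delta \subseteq \Sigma$, and transitions in $\delta$), and it is itself complete and reversible because $\mathcal{D}$ is a complete, invertible \GAut.

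The first assertion --- that $\Gamma$ contains at least two elements --- is the first conclusion of \autoref{lem:reversibleNotBireversibleGAutomataArentTorsion} applied directly to $\mathcal{D}$. For the second assertion, fix $a \in \Gamma$ and $u, v \in \Delta^*$, and write $\partial(av) = (\partial v)\, a$ with $\partial v \in \Delta^*$ and $a \in \Gamma$. By \autoref{lem:reversibleNotBireversibleGAutomataArentTorsion} applied to $\mathcal{D}$, the element $\partial(av) \circ_{\mathcal{D}}$ has no torsion in $\mathscr{G}(\mathcal{D})$, and hence none in $\mathscr{S}(\mathcal{D})$ either. Taking $\bm{q} := av \in \Delta^+$ in \autoref{thm:torsionIsFiniteDualOrbit} for $\mathcal{D}$ and using the equivalence of its first two items, we obtain that $R^* \circ_{\partial \mathcal{D}} (av)^\omega$ is infinite. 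Since $\partial \mathcal{D}$ is complete and reversible, \autoref{fact:reversibleIsBijection} yields that $\sigma \mapsto \sigma \cdot_{\partial \mathcal{D}} u$ is a length-preserving bijection on $R^*$; the projection $\sigma \circ_{\partial \mathcal{D}} u(av)^\omega \mapsto (\sigma \cdot_{\partial \mathcal{D}} u) \circ_{\partial \mathcal{D}} (av)^\omega$ is therefore a surjection from $R^* \circ_{\partial \mathcal{D}} u(av)^\omega$ onto $R^* \circ_{\partial \mathcal{D}} (av)^\omega$, forcing the former set to be infinite as well. Since $\partial \mathcal{D}$ is a sub-automaton of $\mathcal{T}$, this orbit is contained in $Q^* \circ u(av)^\omega$, which is therefore infinite as required.

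The two places where one has to be careful are the bookkeeping through the several duality steps and checking that \autoref{lem:reversibleNotBireversibleGAutomataArentTorsion} really applies in the corner case $v = \varepsilon$, in which the ``prefix'' $\partial v$ fed to the lemma is empty. This latter case is in fact covered by the proof of the lemma: the argument never uses that the prefix $\bm{q}$ there is non-empty, only that the product $\bm{q} p$ is non-empty --- and this is guaranteed by $p$ being a single state.
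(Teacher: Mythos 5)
Your proof is correct and follows essentially the same route as the paper: apply \autoref{lem:reversibleNotBireversibleGAutomataArentTorsion} to $\mathcal{D}$ to exclude torsion for $(\rev{v})a\circ{}\!$ and then translate this into infiniteness of the orbit via \autoref{thm:torsionIsFiniteDualOrbit}. The only (harmless) deviation is that the paper absorbs the prefix $u$ by invoking the third item of \autoref{thm:torsionIsFiniteDualOrbit} directly, whereas you use only the first two items and handle $u$ by an explicit prefix-removal surjection via \autoref{fact:reversibleIsBijection}; your remark that the lemma's proof covers the case $v = \varepsilon$ is also a point the paper glosses over.
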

    \begin{proof}
      Let $\Gamma$ be the state set of some non-bi-reversible connected component of $\mathcal{D}$. By \autoref{lem:reversibleNotBireversibleGAutomataArentTorsion}, no element $va \circ{}\!$ from $\mathscr{S}(\mathcal{D})$ with $a \in \Gamma$ and $v \in \Delta^*$ has torsion. By \autoref{thm:torsionIsFiniteDualOrbit}, this means that all orbits $R^* \circ u(a\rev{v})^\omega \subseteq Q^* \circ u(a\rev{v})^\omega$ for $u \in \Delta^*$ are infinite.
    \end{proof}
    
    \begin{cor}\label{cor:reversibleNotBireversibleGroupHasPeriodicWordWithInfiniteOrbit}
      Let $\mathcal{T} = (Q, \Sigma, \delta)$ be a reversible but not bi-reversible \GAut. Then, there are two distinct letters $a, b \in \Sigma$ such that all $Q^* \circ u(av)^\omega$ and all $Q^* \circ u(bv)^\omega$ for $u, v \in \Sigma^*$ are infinite.
    \end{cor}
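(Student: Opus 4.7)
The plan is to apply \autoref{thm:reversibleNotBireversibleGroupsHaveInfiniteOrbitsAtPeriodicPoints} directly, using the dual $\partial\mathcal{T}$ itself as the required reversible \GAut sub-automaton of $\partial\mathcal{T}$. Once this is set up, picking the state set $\Gamma \subseteq \Sigma$ of any non-bi-reversible connected component of $\partial\mathcal{T}$ and choosing two distinct letters $a,b \in \Gamma$ (which exist since $|\Gamma| \geq 2$ by the theorem) immediately yields the claim, because $\Delta = \Sigma$ in this application.

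The main work is therefore verifying that $\partial\mathcal{T}$ is a reversible \GAut that is not bi-reversible. Here I would rely on the standard duality bookkeeping already recalled in the preliminaries: taking the dual preserves determinism and completeness, and swaps invertibility with reversibility (as noted in the paragraph preceding \autoref{fact:reversibleIsBijection}). In addition, inverse-reversibility is preserved under duality, which one checks directly from the definition: the condition that for each pair (output letter, target state) there be at most one pair (source state, input letter) labelling a transition is symmetric in the roles of letters and states, and so coincides with the corresponding condition for $\partial\mathcal{T}$. Putting these together, since $\mathcal{T}$ is a reversible \GAut, $\partial\mathcal{T}$ is deterministic, complete, invertible (from reversibility of $\mathcal{T}$) and reversible (from invertibility of $\mathcal{T}$); thus $\partial\mathcal{T}$ is itself a reversible \GAut. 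And since $\mathcal{T}$ fails to be bi-reversible it fails to be inverse-reversible, and by the preservation just observed the same is true of $\partial\mathcal{T}$.

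Hence $\partial\mathcal{T}$ is a reversible but not bi-reversible \GAut and, in particular, contains at least one non-bi-reversible connected component. Applying \autoref{thm:reversibleNotBireversibleGroupsHaveInfiniteOrbitsAtPeriodicPoints} with $\mathcal{D} = \partial\mathcal{T}$, so that $\Delta = \Sigma$ and $R = Q$, we obtain that the state set $\Gamma \subseteq \Sigma$ of this component has at least two elements and that $Q^* \circ u(av)^\omega$ is infinite for every $a \in \Gamma$ and all $u,v \in \Sigma^*$. Choosing any two distinct $a,b \in \Gamma$ produces the desired letters.

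I do not anticipate a substantive obstacle: all of the non-trivial content is already contained in \autoref{thm:reversibleNotBireversibleGroupsHaveInfiniteOrbitsAtPeriodicPoints}, and what remains is purely the duality bookkeeping. The only point deserving a moment of care is the observation that inverse-reversibility is \emph{preserved} (rather than swapped) under duality, which is what guarantees that the non-bi-reversibility hypothesis transfers from $\mathcal{T}$ to $\partial\mathcal{T}$.
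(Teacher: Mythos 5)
Your proposal is correct and follows essentially the same route as the paper: observe that $\partial\mathcal{T}$ is itself a reversible but not bi-reversible \GAut (hence has a non-bi-reversible connected component) and apply \autoref{thm:reversibleNotBireversibleGroupsHaveInfiniteOrbitsAtPeriodicPoints} with $\mathcal{D} = \partial\mathcal{T}$. Your extra care in checking that inverse-reversibility is preserved under duality is a correct and worthwhile elaboration of the paper's one-line ``Notice that $\partial\mathcal{T}$ is a reversible but not bi-reversible \GAut as well.''
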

    \begin{proof}
      Notice that $\partial \mathcal{T}$ is a reversible but not bi-reversible \GAut as well. Thus, the corollary follows from \autoref{thm:reversibleNotBireversibleGroupsHaveInfiniteOrbitsAtPeriodicPoints} since $\partial \mathcal{T}$ must contain at least one non-bi-reversible connected component.
    \end{proof}
    
    Interestingly, we can combine \autoref{thm:reversibleNotBireversibleGroupsHaveInfiniteOrbitsAtPeriodicPoints} with the result about $\omega$-words with a finite orbit from \autoref{prop:finiteOrbitImpliesPeriodicFiniteOrbit} to obtain that many orbits in groups generated by reversible but not bi-reversible \GAuta are \emph{infinite}.
    \begin{cor}\label{cor:infiniteOrbitsInReversibleNonBiReversibleGAutomata}
      Let $\mathcal{T} = (Q, \Sigma, \delta)$ be a reversible but not bi-reversible \GAut and let $\Gamma \subseteq \Sigma$ denote the set of states in the dual automaton $\partial \mathcal{T}$ belonging to a non-bi-reversible connected component.
      
      Then, every $\alpha \in \Sigma^\omega$ which contains at least one letter from $\Gamma$ infinitely often has an infinite orbit: $|Q^* \circ \alpha| = \infty$.
    \end{cor}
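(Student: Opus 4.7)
My plan is to argue by contradiction, combining \autoref{prop:finiteOrbitImpliesPeriodicFiniteOrbit} with \autoref{thm:reversibleNotBireversibleGroupsHaveInfiniteOrbitsAtPeriodicPoints} applied to the dual of $\mathcal{T}$. Assume that some $\alpha \in \Sigma^\omega$ containing a letter from $\Gamma$ infinitely often satisfies $|Q^* \circ \alpha| < \infty$. Since $\mathcal{T}$ is a \GAut (hence complete) and reversible by hypothesis, the stronger conclusion of \autoref{prop:finiteOrbitImpliesPeriodicFiniteOrbit} applies and yields some $v \in \Sigma^+$ with $|Q^* \circ v^\omega| < \infty$ and such that $v$ contains every letter appearing infinitely often in $\alpha$. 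In particular, $v$ contains at least one letter $a \in \Gamma$, so we may factorize $v = u a w$ with $u, w \in \Sigma^*$ and rewrite $v^\omega = u(awu)^\omega$, which has the form $u(a v')^\omega$ with $v' = wu \in \Sigma^*$.

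Next, I would observe that $\partial \mathcal{T}$ is itself a reversible \GAut: the dual of a complete, deterministic automaton is complete and deterministic, the dual of an invertible automaton is reversible, and the dual of a reversible automaton is invertible; moreover, $\partial \mathcal{T}$ fails to be bi-reversible, since otherwise $\mathcal{T} = \partial \partial \mathcal{T}$ would be bi-reversible too. Hence $\partial \mathcal{T}$ is a reversible \GAut satisfying the hypotheses of \autoref{thm:reversibleNotBireversibleGroupsHaveInfiniteOrbitsAtPeriodicPoints} when taken as the sub-automaton $\mathcal{D}$ of its own ambient automaton $\partial \mathcal{T}$, and its non-bi-reversible connected components are precisely those whose state sets form the set $\Gamma$ from the corollary statement.

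Applying \autoref{thm:reversibleNotBireversibleGroupsHaveInfiniteOrbitsAtPeriodicPoints} with this $\mathcal{D} = \partial \mathcal{T}$ to the letter $a \in \Gamma$ and the words $u, v' \in \Sigma^*$ then gives $|Q^* \circ u(a v')^\omega| = \infty$, which contradicts $|Q^* \circ v^\omega| < \infty$ since $v^\omega = u(a v')^\omega$. This contradiction proves $|Q^* \circ \alpha| = \infty$.

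The only potentially delicate point is the interface between the two cited results: I need to be sure that (i) the stronger, \emph{periodic} part of \autoref{prop:finiteOrbitImpliesPeriodicFiniteOrbit} really is available for $\mathcal{T}$ (it is, because \GAuta are complete and we are assuming reversibility), (ii) the period word $v$ can be forced to contain a letter of $\Gamma$ (this is guaranteed by the ``contains all letters appearing infinitely often'' clause together with the hypothesis on $\alpha$), and (iii) $\partial \mathcal{T}$ qualifies as a reversible \GAut with a non-bi-reversible component, so that \autoref{thm:reversibleNotBireversibleGroupsHaveInfiniteOrbitsAtPeriodicPoints} applies. All three points are immediate from the preliminaries, so the proof reduces to a short chain of citations once the factorization $v^\omega = u(awu)^\omega$ is noticed.
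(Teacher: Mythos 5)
Your proof is correct and follows essentially the same route as the paper's: assume a finite orbit, invoke \autoref{prop:finiteOrbitImpliesPeriodicFiniteOrbit} to extract a (periodic) word whose period contains a letter $a \in \Gamma$, rewrite it in the form $u(av')^\omega$, and contradict \autoref{thm:reversibleNotBireversibleGroupsHaveInfiniteOrbitsAtPeriodicPoints}. The paper's proof is terser, but your explicit checks (that the periodic case of the proposition applies because a \GAut is complete, and that $\partial\mathcal{T}$ is a reversible non-bi-reversible \GAut) are exactly the points it leaves implicit.
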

    \begin{proof}
      Suppose to the contrary that there is some $\alpha \in \Sigma^\omega$ with a finite orbit such that $\alpha$ contains a letter $a \in \Gamma$ infinitely often. Then, by \autoref{prop:finiteOrbitImpliesPeriodicFiniteOrbit}, there is some $w \in \Sigma^+$ with $w = w_1 a w_2$ for some $w_1, w_2 \in \Sigma^*$ such that the orbit of $w^\omega = w_1 (a w_2 w_1)^\omega$ is finite. However, from \autoref{thm:reversibleNotBireversibleGroupsHaveInfiniteOrbitsAtPeriodicPoints} follows that all words of the form $u (av)^\omega$ must have infinite orbit; a contradiction.
    \end{proof}
        
    The previous corollary directly implies that no infinite word has a finite orbit under the action of a reversible but not bi-reversible \GAut with a connected dual:
    \begin{cor}
      Let $\mathcal{T} = (Q, \Sigma, \delta)$ be a reversible but not bi-reversible \GAut whose dual $\partial \mathcal{T}$ is connected. Then, every $\omega$-word $\alpha \in \Sigma^\omega$ has an infinite orbit $Q^* \circ \alpha$.
    \end{cor}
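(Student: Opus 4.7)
The plan is to apply the preceding \autoref{cor:infiniteOrbitsInReversibleNonBiReversibleGAutomata} with $\Gamma = \Sigma$. That corollary already delivers infinite orbit for any $\omega$-word containing some letter of $\Gamma$ infinitely often, so all that must be verified is that, under the connectedness hypothesis on $\partial\mathcal{T}$, every letter of $\Sigma$ lies in $\Gamma$, and that every $\omega$-word over $\Sigma$ contains at least one such letter infinitely often.

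For the first point I would argue as follows. Since $\partial\mathcal{T}$ is connected, it has a single (strongly) connected component, which is $\partial\mathcal{T}$ itself; so I only need to check that $\partial\mathcal{T}$ is non-bi-reversible. Inverse-reversibility is preserved under taking duals (the defining condition, namely that for every destination state and output letter at most one transition exists, is symmetric under the swap of states and letters that defines $\partial\mathcal{T}$). Because $\mathcal{T}$ is reversible but not bi-reversible, $\mathcal{T}$ itself fails to be inverse-reversible, and therefore $\partial\mathcal{T}$ also fails to be inverse-reversible. Combined with the fact that $\partial\mathcal{T}$ is reversible (as the dual of the invertible \GAut $\mathcal{T}$), this shows $\partial\mathcal{T}$ is not bi-reversible. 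Hence its unique connected component is non-bi-reversible, and $\Gamma = \Sigma$.

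For the second point, since $\Sigma$ is finite by the definition of alphabet in \autoref{sec:preliminaries} while any $\alpha \in \Sigma^\omega$ has infinitely many positions, the pigeonhole principle guarantees that at least one letter of $\Sigma = \Gamma$ appears infinitely often in $\alpha$. Applying \autoref{cor:infiniteOrbitsInReversibleNonBiReversibleGAutomata} to this $\alpha$ then yields $|Q^* \circ \alpha| = \infty$, as required. I do not expect any substantial obstacle: the statement is a direct consequence of the previous corollary, and the only nontrivial checkpoint is the short observation that inverse-reversibility passes to the dual.
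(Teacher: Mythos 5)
Your proposal is correct and follows exactly the route the paper takes: since $\partial\mathcal{T}$ is connected and not bi-reversible, every letter of $\Sigma$ lies in a non-bi-reversible component, so \autoref{cor:infiniteOrbitsInReversibleNonBiReversibleGAutomata} applies. You merely spell out what the paper dismisses as \enquote{obvious}, namely that inverse-reversibility is self-dual and hence the failure of bi-reversibility passes from $\mathcal{T}$ to $\partial\mathcal{T}$; that verification is accurate.
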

    \begin{proof}
      Obviously, all letters from $\Sigma$ belong to a non-bi-reversible connected component of the dual. Therefore, any $\alpha \in \Sigma^\omega$ must, in particular, contain at least one of them infinitely often and the result follows from \autoref{cor:infiniteOrbitsInReversibleNonBiReversibleGAutomata}.
    \end{proof}
    
    As a side remark, we note that, in \autoref{lem:reversibleNotBireversibleGAutomataArentTorsion}, we can neither drop the completeness nor the invertibility requirement, which also has consequences for the other results above. To see the former, consider the \SAut $\mathcal{T}$
    {\makeatletter
     \@endparpenalty=10000
     \makeatother
    \begin{center}
      \begin{tikzpicture}[auto, shorten >=1pt, >=latex, baseline=(p.base)]
        \node[state] (q) {$q$};
        \node[state, right=of q] (p) {$p$};
        \draw[->] (q) edge[bend right] node[swap] {$a/b$} (p)
                  (p) edge[bend right] node[swap] {$a/a$} (q)
                      edge[loop right] node {$b/b$} (p)
        ;
      \end{tikzpicture},
    \end{center}}\noindent
    which is strongly connected, reversible and invertible but neither bi-reversible nor complete. One may observe that $q^2 \circ{}\!$ is undefined on all words (except $\varepsilon$) and that $q \circ{}\!$, therefore, has torsion as we have $q^2 \circ{}\! = q^3 \circ{}\!$. In fact, it turns out that the semigroup generated by the automaton is finite\footnote{This can, for example, be seen by computing its powers (up to the fourth one); the left Cayley graph of the generated semigroup can be found in \autoref{fig:CayleyGraphOfT1}.} and, thus, that all its elements have torsion.
    \begin{figure}
      \begin{center}
        \begin{tikzpicture}[auto, shorten >=1pt, >=latex]
          \node[state] (p) {$p \circ{}\!$};
          \node[state, below=of p] (q) {$q \circ{}\!$};
          \node[state, right=of p] (qp) {$qp \circ{}\!$};
          \node[state, above=of qp] (pp) {$p^2 \circ{}\!$};
          \node[state, right=of q] (pq) {$pq \circ{}\!$};
          \node[state, right=of pq] (qq) {$q^2 \circ{}\!$};
          \node[state, right=of pp] (qpp) {$qp^2 \circ{}\!$};
          
          \draw[->] (p) edge node {$p \circ{}\!$} (pp)
                        edge node {$q \circ{}\!$} (qp)
                    (q) edge node {$p \circ{}\!$} (pq)
                        edge[bend right] node[swap] {$q \circ{}\!$} (qq)
                    (pp) edge[loop left] node {$p \circ{}\!$} (pp)
                         edge node {$q \circ{}\!$} (qpp)
                    (qp) edge[loop right] node {$p \circ{}\!$} (qp)
                         edge node {$q \circ{}\!$} (qq)
                    (pq) edge[out=130,in=100,looseness=8] node {$p \circ{}\!$} (pq)
                         edge node {$q \circ{}\!$} (qq)
                    (qq) edge[loop right] node {$p \circ{}\!, q \circ{}\!$} (qq)
                    (qpp) edge[loop right] node {$p \circ{}\!$} (qpp)
                          edge[bend left] node {$q \circ{}\!$} (qq)
          ;
        \end{tikzpicture}
      \end{center}\vspace{-\baselineskip}
      \caption{Left Cayley graph of $\mathscr{S}(\mathcal{T}_1)$.}\label{fig:CayleyGraphOfT1}
    \end{figure}
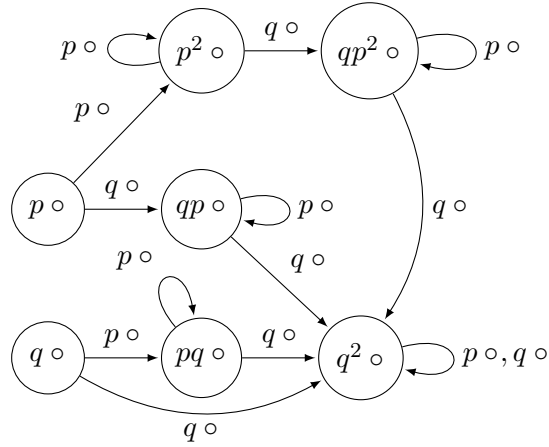

    To see that the automaton in the statements of \autoref{lem:reversibleNotBireversibleGAutomataArentTorsion} and the corollaries needs to be invertible, we use the following counter-example based on the connection from \autoref{thm:torsionIsFiniteDualOrbit}.
    \begin{counterexample}\label{prop:infiniteSemigroupWithAllPeriodicOrbitsFinite}
      Let $\mathcal{T} = (Q, \Sigma, \delta)$ be the \GAut
      \begin{center}
        \begin{tikzpicture}[auto, shorten >=1pt, >=latex, baseline=(b.base)]
          \node[state] (b) {$b$};
          \node[state, above right=of b] (a) {$a$};
          \node[state, below right=of b] (d) {$d$};
          \node[state, below right=of a] (c) {$c$};
          \node[state, right=of c] (id) {$\textnormal{id}$};
          
          \draw[->] (a) edge[bend left] node[align=center] {$0/1$\\$1/0$} (id)
                    (b) edge node {$0/0$} (a)
                    (b) edge node[swap] {$1/1$} (c)
                    (c) edge node {$0/0$} (a)
                    (c) edge node {$1/1$} (d)
                    (d) edge[bend right] node[swap] {$0/0$} (id)
                    (d) edge node {$1/1$} (b)
                    (id) edge[loop right] node[align=center] {$0/0$\\$1/1$} (id)
                    ;
        \end{tikzpicture},
      \end{center}
      which generates the Grigorchuk group. The Grigorchuk group $\mathscr{G}(\mathcal{T})$ is infinite (see e.\,g.\ \cite[Theorem~1.6.1]{nekrashevych2005self}) and, thus, so is $\mathscr{S}(\mathcal{T})$ (see, for example, \cite[Proof of Lemma~5]{aklmp12})\footnote{In fact, we have $\mathscr{S}(\mathcal{T}) = \mathscr{G}(\mathcal{T})$ since the group is a Burnside group (see e.\,g.\ \cite[Theorem~1.6.1]{nekrashevych2005self}).}.
      
      The dual $\partial \mathcal{T}$ of $\mathcal{T}$\enlargethispage{\baselineskip}
      {\makeatletter
       \@endparpenalty=10000
       \makeatother
      \begin{center}
        \begin{tikzpicture}[auto, shorten >=1pt, >=latex, baseline=(b.base)]
          \node[state] (0) {$0$};
          \node[state, right=of 0] (1) {$1$};
          
          \draw[->] (0) edge[loop left] node[align=center] {$\textnormal{id}/\textnormal{id}$\\$b/a$\\$c/a$\\$d/\textnormal{id}$} (0)
                    (0) edge[bend left] node {$a/\textnormal{id}$} (1)
                    (1) edge[loop right] node[align=center] {$\textnormal{id}/\textnormal{id}$\\$b/c$\\$c/d$\\$d/b$} (1)
                    (1) edge[bend left] node {$a/\textnormal{id}$} (0);
        \end{tikzpicture}
      \end{center}}\noindent
      is complete and reversible but neither invertible nor bi-reversible. It generates an infinite semigroup $\mathscr{S}(\partial \mathcal{T})$ since an \SAut generates a finite semigroup if and only if its dual does \cite[Proposition~10]{aklmp12}. However, the orbits of all ultimately periodic words $\Sigma^* \circ_{\partial} \bm{p}\bm{q}^\omega$ under its action with $\bm{p} \in Q^*$ and $\bm{q} \in Q^+$ are finite.
      
      This is the case because all elements of the Grigorchuk group $\mathscr{G}(\mathcal{T})$ have finite order (see e.\,g.\ \cite[Theorem~1.6.1]{nekrashevych2005self}), i.\,e.\ $\mathscr{G}(\mathcal{T})$ is a Burnside group, and because \autoref{thm:torsionIsFiniteDualOrbit} yields that the orbit $\Sigma^* \circ_{\partial} \bm{p} \bm{q}^\omega$ for any $\bm{p} \in Q^*$ and $\bm{q} \in Q^+$ is finite if and only if $\rev{\bm{q}} \circ{}\!$ has torsion in $\mathscr{S}(\mathcal{T})$.
    \end{counterexample}
    
    Another important consequence of \autoref{prop:infiniteSemigroupWithAllPeriodicOrbitsFinite} is that an infinite automaton semigroup does not always admit a periodic or ultimately periodic word with an infinite orbit. Thus, \autoref{cor:infiniteSemigroupsHaveInfiniteOrbits} cannot be extended in this way. However, since the automaton $\partial \mathcal{T}$ is not a \GAut, this still leaves the following question for automaton groups open.
    \begin{prob}
      Does every \GAut generating an infinite group admit a periodic or ultimately periodic $\omega$-word with an infinite orbit?
    \end{prob}

    In fact, for arbitrary \SAuta $\mathcal{T}$ and $\mathcal{T}'$, an isomorphism between $\mathscr{S}(\mathcal{T})$ and $\mathscr{S}(\mathcal{T}')$ \emph{does not} imply that $\mathscr{S}(\partial \mathcal{T})$ and $\mathscr{S}(\partial \mathcal{T}')$ are isomorphic. In other words, the dual is not an algebraic property of an automaton semigroup but only a property of the presentation by a specific automaton. The dual of the Grigorchuk automaton depicted in \autoref{prop:infiniteSemigroupWithAllPeriodicOrbitsFinite} generates the free semigroup of rank two\footnote{This follows, for example, from the fact that reversible two-state \SAuta generate either free of finite semigroups \cite{Klimann2016}.}. This semigroup can also be generated by a different automaton (see \cite[Proposition~4.1]{cain2009automaton}) which admits a periodic $\omega$-word with an infinite orbit (in fact, every $\omega$-word has an infinite orbit under its action).

    Therefore, we have not completely settled the semigroup case either. It could still be the case that every infinite automaton semigroup is generated by some \SAut admitting a periodic word with infinite orbit.
  \end{section}

\bibliographystyle{plain}
\bibliography{references}

\end{document}